\definecolor{MyBlue}{cmyk}{1,0.13,0,0.63}
\definecolor{MyGreen}{cmyk}{0.91,0,0.88,0.52}
\newcommand{\mylinkcolor}{MyBlue}
\newcommand{\mycitecolor}{MyGreen}
\newcommand{\myurlcolor}{black}
\newtheorem{theorem}{Theorem}
\newtheorem*{thm*}{Theorem}
\newtheorem{corollary}{Corollary}
\newtheorem{lemma}{Lemma}
\newtheorem{proposition}[theorem]{Proposition}
\theoremstyle{definition}
\newtheorem{definition}{Definition}
\theoremstyle{remark}
\newtheorem{remark}{Remark}
\newcommand{\End}{\ensuremath{\mathrm{End}}}
\newcommand{\wh}{\ensuremath{\widehat}}
\newcommand{\R}{\ensuremath{\mathbb{R}}}
\newcommand{\N}{\ensuremath{\mathbb{N}}}
\newcommand{\Z}{\ensuremath{\mathbb{Z}}}
\newcommand{\C}{\ensuremath{\mathbb{C}}}
\newcommand{\T}{\ensuremath{\mathbb{T}}}
\def\calT{\mathcal{T}}
\def\calK{\mathcal{K}}
\def\calB{\mathcal{B}}
\def\calH{\mathcal{H}}
\def\calA{\mathcal{A}}
\def\calN{\mathcal{N}}
\def\calS{\mathcal{S}}
\def\calV{\mathcal{V}}
\def\calU{\mathcal{U}}
\def\bP{\mathbf{P}}
\newcommand{\ol}{\overline}
\theoremstyle{definition}
\DeclareMathOperator{\Dom}{Dom}
\DeclareMathOperator{\Index}{Index}
\DeclareMathOperator{\Ker}{Ker}
\DeclareMathOperator{\Tr}{Tr}
\DeclareMathOperator*{\res}{res}
\newcommand{\rst}[1]{\ensuremath{{\mathbin\upharpoonright}%
\raise-.5ex\hbox{$#1$}}}
\newcommand{\Rmnum}[1]{\expandafter\@slowromancap\romannumeral #1@}
\author{C. Bourne}
\address{Department Mathematik, Friedrich-Alexander-Universit\"{a}t Erlangen-N\"{u}rnberg, 
Cauerstra{\ss}e 11, 91058 Erlangen, Germany \emph{and} 
Advanced Institute for Materials Research, Tohoku University,
2-1-1 Katahira, Aoba-ku, Sendai, 980-8577, Japan}
\email{christopher.j.bourne@gmail.com}
\author{H. Schulz-Baldes}
\address{Department Mathematik, Friedrich-Alexander-Universit\"{a}t Erlangen-N\"{u}rnberg, 
Cauerstra{\ss}e 11, 91058 Erlangen, Germany}
\email{schuba@mi.uni-erlangen.de}
\date{\today}
\begin{document}

\begin{abstract}
Recent work by Prodan and the second author showed that weak invariants 
of topological insulators can be described using Kasparov's 
$KK$-theory. In this note, a complementary 
description using semifinite index theory is given. This provides 
an alternative proof of the index formulae for 
weak complex topological phases using 
the semifinite local index formula. Real invariants and the 
bulk-boundary correspondence are also briefly considered.
\end{abstract}

\title{Application of semifinite index theory to 
weak topological phases}

\maketitle

\section{Introduction}
The application of techniques from the index theory of operator 
algebras to 
systems in condensed matter physics has given fruitful results, 
the quantum Hall effect being a key early example~\cite{Bellissard94}. 
More recently, $C^*$-algebras and their $K$-theory (and $K$-homology) 
have been applied to topological insulator systems, see for 
example~\cite{BCR15, GSB15,Kellendonk15,Kubota15b, PSBbook,Thiang14}.

\vspace{.1cm}

The framework of $C^*$-algebras is able to encode disordered systems 
with arbitrary (possibly irrational) magnetic field strength, something that standard methods
in solid state physics are unable to do. Furthermore, by considering 
the geometry of a dense subalgebra of the weak closure of the observable algebra, 
one can derive index formulae that relate
physical phenomena, such as the Hall conductivity, to an index of a Fredholm operator.

\vspace{.1cm}

Topological insulators are special materials which behave 
as an insulator in the interior (bulk) of the system, but have 
conducting modes at the edges of the system going along with non-trivial
topological invariants in the bulk \cite{SRFL}.
Influential work by Kitaev suggested that these properties are 
related to the $K$-theory of the momentum space of a free-fermionic system~\cite{Kitaev09}.

\vspace{.1cm}

Recent work by Prodan and the second author considered  
so-called `weak' topological phases of topological insulators~\cite{PSBKK}. 
In the picture without disorder or magnetic flux, a topological phase is classified 
by the real or complex $K$-theory of the torus $\T^d$ of 
dimension $d$. Relating Atiyah's $KR$-theory \cite{Ati} to the $K$-theory of $C^*$-algebras and 
then using the 
Pimsner--Voiculescu sequence with trivial action allows us to 
compute the relevant $K$-groups explicitly,
\begin{equation}\label{eq:periodic_weak_phase}
   KR^{-n}(\T^d,\zeta) \;\cong \; KO_n(C(i\T^d)) \;\cong \; KO_n(C^*(\Z^d)) \;\cong \; 
   \bigoplus_{j=0}^d \binom{d}{j} KO_{n-j}(\R)\;.
\end{equation}
Here $n$ labels the universality class as described in \cite{Kitaev09,BCR15} and 
$C(i\T^d)$ is the real $C^*$-algebra 
$\{f\in C(\T^d,\C)\,:\, \ol{f(x)} = f(-x)\}$, which naturally encodes 
the involution $\zeta$ on $\T^d$.
The `top degree' term $KO_{n-d}(\R)$ is said to represent the strong 
invariants of the topological insulator and all lower-order terms are called 
weak invariants.

\vspace{.1cm}

Bounded and complex Kasparov modules were used to provide a 
framework to compute weak invariants in the case of magnetic field 
and (weak) disorder in~\cite{PSBKK}. A geometric identity is 
used there to derive a local formula for the weak invariants.
The purpose of this paper is to 
provide an alternative proof of this result using 
semifinite spectral triples and, in particular, the semifinite 
local index formula in \cite{CPRS2,CPRS3}. This shows 
the flexibility of the operator algebraic approach and complements 
the work in~\cite{PSBKK}.

\vspace{.1cm}

The framework employed here largely follows from previous work, namely~\cite{BKR1}, where 
a Kasparov module and semifinite spectral triple were constructed 
for a unital $C^*$-algebra $B$ with a twisted $\Z^k$-action and 
invariant trace. Therefore the
main task here is the computation of the resolvent cocycle that 
represents the (semifinite) Chern character and its application 
to weak invariants. Furthermore, the bulk-boundary correspondence 
proved in~\cite{PSBbook,BKR1} also carries over, which allows us to 
relate topological pairings of the system without edge to pairings 
concentrated on the boundary of the sample.

\vspace{.1cm}

\subsubsection*{Acknowledgements}
We thank our collaborators, Alan Carey, Johannes Kellendonk, Emil Prodan and 
Adam Rennie, whose work this builds from. We also thank the anonymous referee, whose 
careful reading and suggestions have improved the manuscript. We are partially supported 
by the DFG grant SCHU-1358/6 and C. B. also thanks the Australian Mathematical 
Society and the Japan Society for the Promotion of Science for financial support.


\section{Review: Twisted crossed products and semifinite index theory}
\subsection{Preliminaries}
Let us briefly recall the basics of Kasparov theory that are
needed for this paper; a more comprehensive treatment can be 
found in~\cite{Blackadar, PSBKK}. Due to the anti-linear 
symmetries that exist in topological phases, 
both complex and real spaces and algebras  are considered.

\vspace{.1cm}

Given a real or complex right-$B$ $C^*$-module $E_B$, we will denote by 
$(\cdot\mid\cdot)_B$ the $B$-valued inner-product and by $\End_B(E)$ 
the adjointable endomorphisms on $E$ with respect to this inner product. 
The rank-$1$ operators $\Theta_{e,f}$, $e,f\in E_B$, are defined such that
$$
   \Theta_{e,f}(g)\; =\; e\cdot (f\mid g)_B\;, \qquad e,f,g\in E_B\;.
$$
Then $\End_B^{00}(E)$ denotes the span of such rank-$1$ operators.
The compact operators on the module, $\End_B^0(E)$, is the norm closure 
of $\End^{00}_B(E)$. We will often work with $\Z_2$-graded algebras and 
spaces and denote by $\hat\otimes$ the graded tensor product (see~\cite[Section 2]{Kasparov80}
and \cite{Blackadar}). Also see~\cite[Chapter 9]{Lance} for the basic theory 
of unbounded operators on $C^*$-modules.

\begin{definition}
{\sl Let $A$ and $B$ be $\Z_2$-graded real (resp. complex) $C^*$-algebras.
A real (complex) unbounded Kasparov module $(\calA, {}_\pi{E}_B, D)$ is a
$\Z_2$-graded real (complex) $C^*$-module ${E}_B$, a graded 
homomorphism $\pi:A \to \End_B({E})$, and an
unbounded self-adjoint, regular and odd operator $D$ such that for all $a\in \calA\subset A$, 
a dense $\ast$-subalgebra,
\begin{align*}
  & [D,\pi(a)]_\pm \,\in\, \End_B({E})\;,   &&\pi(a)(1+D^2)^{-1/2}\,\in\, \End_B^0({E})\;.
\end{align*}
For complex algebras and spaces, one can also remove the gradings, in 
which case the Kasparov module is called odd (otherwise even).
}
\end{definition}
We will often omit the representation $\pi$ when the left-action is unambiguous.
Unbounded Kasparov modules represent classes in the $KK$-group 
$KK(A,B)$ or $KKO(A,B)$~\cite{BJ83}. 

\vspace{.1cm}

Closely related to unbounded Kasparov modules are semifinite spectral 
triples.
Let $\tau$ be a fixed faithful, normal, semifinite trace on a von Neumann algebra 
$\calN$. Graded von Neumann algebras can be considered in an 
analogous way to graded $C^*$-algebras, though the only graded 
von Neumann algebras we will consider are of the form $\calN_0 \hat\otimes \End(\calV)$, 
with $\calN_0$ trivially graded and $\End(\calV)$ the graded operators on a 
finite dimensional and $\Z_2$-graded Hilbert space $\calV$. 
We denote by $\calK_\calN$ the $\tau$-compact
operators in $\calN$, that is, the norm closed ideal generated by the 
projections $P\in\calN$ with $\tau(P)<\infty$. For graded von Neumann algebras, 
non-trivial projections $P\in\calN$ are even, though the grading 
$\mathrm{Ad}_{\sigma_3}$ on $M_2(\calN)$ gives a grading on $M_n(\calK_\calN)$.
\begin{definition}
{\sl Let $\calN$ be a graded semifinite von Neumann algebra with trace $\tau$.
A semifinite spectral triple $(\calA,\calH,D)$ is given by a $\Z_2$-graded 
Hilbert space $\calH$, 
a graded $\ast$-algebra $\calA\subset\calN$ with $C^*$-closure $A$ and 
a graded representation on 
$\calH$, together with a densely defined odd 
unbounded self-adjoint operator $D$ affiliated to $\calN$ such that
\begin{enumerate}
  \item $[D,a]_\pm$ is well-defined on $\Dom(D)$ and extends to a bounded operator on $\calH$ for 
  all $a\in\calA$,
  \item $a(1+D^2)^{-1/2}\in\calK_\calN$ for all $a\in A$.
\end{enumerate}
}
\end{definition}

For $\calN=\calB(\calH)$ and $\tau = \Tr$, one recovers the 
usual definition of a spectral triple.

\vspace{.1cm}

If $(\calA,E_B,D)$ is an unbounded Kasparov module and the right-hand algebra 
$B$ has a faithful, semifinite and norm lower semicontinuous
trace $\tau_B$, then one can often construct a semifinite spectral 
triple using results from~\cite{LN04}. We follow this route in 
Section~\ref{sec-semifiniteconst} below. The converse is always true, 
namely a semifinite spectral triple gives rise to a class in $KK(A,C)$ 
with $C$ a subalgebra of $\calK_\calN$~\cite[Theorem 4.1]{KNR}. If $A$ is 
separable, this algebra $C$ can be chosen to be separable as well \cite[Theorem 5.3]{KNR}, 
but in a largely ad-hoc fashion.
Becasuse we first construct a Kasparov module
and subsequently build a semifinite spectral triple, one obtains 
more explicit control on the image of the semifinite index pairing defined next
(see Lemma~\ref{lemma:semifinite_is_KK} below). Therefore the algebra $C$
is not required here (as in \cite[Proposition 2.13]{CGRS2}) 
to assure that the range of the semifinite index pairing is 
countably generated, i.e. a discrete 
subset of $\R$.

\vspace{.1cm}

Complex semifinite spectral triples $(\calA,\calH,D)$ with $\calA$ 
trivially graded can be paired with $K$-theory classes in $K_\ast(\calA)$
via the semifinite Fredholm index. 
If $\calA$ is Fr\'{e}chet and stable under the holomorphic functional 
calculus, then $K_\ast(\calA) \cong K_\ast(A)$ and the pairings extend 
to the $C^*$-closure.
Recall that an operator 
$T\in \calN$ that is invertible modulo $\calK_\calN$ has semifinite Fredholm 
index
$$
   \Index_\tau(T) \;= \; \tau(P_{\Ker(T)})\; -\; \tau(P_{\Ker(T^*)})\;,
$$
with $P_{\Ker(T)}$ the projection onto $\Ker(T)\subset\calH$.
\begin{definition}
{\sl Let $(\calA,\calH,D)$ be a unital complex semifinite spectral triple relative to 
$(\calN,\tau)$ with $\calA$ trivially graded and $D$ invertible. 
Let $p$ be a projector in $M_n(\calA)$, which 
represents $[p]\in K_0(\calA)$ and $u$ a unitary in $M_n(\calA)$ 
representing $[u]\in K_1(\calA)$. In the even case, 
define $T_\pm = \frac{1}{2}(1\mp \gamma)T\frac{1}{2}(1\pm \gamma)$ 
with $\mathrm{Ad}_\gamma$ the grading on $\calH$. 
Then with $F = D|D|^{-1}$
and $\Pi =(1+F)/2$, the semifinite index pairing is 
represented by
\begin{align*}
&  \langle [p], (\calA,\calH,D)\rangle \;= \;\Index_{\tau\otimes \Tr_{\C^{n}}}\!(  {p}(F \otimes 1_{n})_+ p)\;, \qquad 
&\textrm{even case}\;, \\
&  \langle [u], (\calA,\calH,D) \rangle \;= \;\Index_{\tau\otimes \Tr_{\C^{n}}}\!\left( (\Pi \otimes 1_{n})u (\Pi \otimes 1_{n}) \right)\;, \qquad &\textrm{odd case}\;.
\end{align*}
}
\end{definition}

If $D$ is not invertible, we define the double spectral triple
$(\calA,\calH\oplus\calH,D_M)$ for $M>0$ and
relative to $(M_2(\calN),\tau\otimes\Tr_{\C^2})$, 
where the operator $D_M$ and the action of $\calA$ is given by
$$
D_M
\;=\; \begin{pmatrix} D & \;\;\;\;M \\ M & \;-D \end{pmatrix}\;,  \qquad
a\;\mapsto\;  \begin{pmatrix} a \;& 0 \\ 0\; & 0 \end{pmatrix}
\;,
$$
for all $a\in\calA$. If $(\calA,\calH,D)$ is graded by $\gamma$, then the double is graded by 
$\hat{\gamma} = \gamma \oplus (-\gamma)$. Doubling the spectral triple does not change the $K$-homology class 
and ensures that the unbounded operator $D_M$ is 
invertible~\cite{Connes85}.

\vspace{.1cm}

A unital semifinite spectral triple $(\calA,\calH,D)$ relative to 
$(\calN,\tau)$ is called $p$-summable if $(1+D^2)^{-s/2}$ is $\tau$-trace-class 
for all $s>p$, and smooth or $QC^\infty$ 
(for quantum $C^\infty$) if for all $a\in\calA$
$$
   a, [D,a] \,\in\, \bigcap_{n\geq 0} \Dom(\delta^n)\;,  \qquad \delta(T)\; =\; [(1+D^2)^{1/2},T]\;.
$$
If $(\calA,\calH,D)$ is complex, $p$-summable and $QC^\infty$, we can apply 
the semifinite local index formula~\cite{CPRS2,CPRS3} to compute the 
semifinite index pairing of $[x]\in K_\ast(A)$ with $(\calA,\calH,D)$ in terms of the resolvent cocycle. 
Because the resolvent cocycle 
is a local expression involving traces and derivations, it is usually easier to 
compute than the semifinite Fredholm index.

\subsection{Crossed products and Kasparov theory}
\label{sec-semifiniteconst}

\subsubsection{The algebra and representation}
Let us consider a $d$-dimensional lattice, so the Hilbert space $\calH = \ell^2(\Z^d)\otimes\C^n$, 
and a disordered family $\{H_\omega\}_{\omega\in\Omega}$ of Hamiltonians acting on $\calH$ indexed
by disorder configurations $\omega$ drawn from a compact space $\Omega$ equipped with 
a  $\Z^d$-action (possibly with twist $\phi$).
One can then construct the algebra of observables
$M_n(C(\Omega)\rtimes_\phi \Z^d)$. 
The family of Hamiltonians $\{H_\omega\}_{\omega\in\Omega}$ are associated 
to a self-adjoint element 
$H\in M_n(C(\Omega)\rtimes_\phi\Z^d)$, and we always assume that $H$ has a 
spectral gap at the Fermi energy. 
The Hilbert space fibres $\C^n$ and the 
matrices $M_n(\C)$ are often used to implement the symmetry operators that 
determine the symmetry-type of the Hamiltonian. However the matrices do 
not play an important role in the construction 
of the Kasparov modules and semifinite spectral triples we 
consider. 
Hence we will work with 
$C(\Omega)\rtimes_\phi\Z^d$, under the knowledge that this algebra can be tensored 
with the matrices (or compact operators) without issue.
The space $C(\Omega)$ can also encode 
a quasicrystal structure and depends on the example under consideration. 

\vspace{.1cm}

The twist $\phi$ is in general a twisting cocycle 
$\phi:\Z^d\times\Z^d\to \calU(C(\Omega))$ such that 
for all $x,\,y,\,z\in\Z^d$,
\begin{align*}
  \phi(x,y)\phi(x+y,z) \;=\;  \alpha_x(\phi(y,z))\phi(x,y+z)\;,  
  \qquad \phi(x,0)=\phi(0,x) \;=\; 1\;,
\end{align*}
see~\cite{PR89}. We also assume that $\phi(x,-x)=1$ for all $x\in\Z^d$ 
as in \cite{KR06} or \cite{PSBbook}, which still encompasses most
examples of physical interest.

\begin{remark}[Anti-linear symmetries, real algebras and twists]
Our model begins with a complex algebra acting on a complex Hilbert space. 
If the Hamiltonian satisfies anti-linear symmetries,
then we restrict to a real subalgebra of the complex algebra 
$C(\Omega)\rtimes_\phi\Z^d$ that is invariant under the induced real structure 
by complex conjugation. This procedure is direct for time-reversal 
symmetry, though modifications are needed for particle-hole symmetry  \cite{Thiang14,GSB15,Kellendonk15}. 
Such a restriction puts stringent constraints on the 
twisting cocycle $\phi$ and will often force the twist to be zero (e.g. if $\phi$ 
arises from an external magetic field). For this reason, in the real 
case, we will only consider untwisted crossed products 
$C(\Omega)\rtimes \Z^d$. We note that this may not encompass 
every example of interest, but we leave the the more general setting 
to another place.
\hfill $\diamond$
\end{remark}

Our focus is on weak topological invariants which have the interpretation of lower-dimensional 
invariants extracted from a higher-dimensional system. Using the 
assumption $\phi(x,-x)=1$, one can rewrite 
$C(\Omega)\rtimes_{\phi}\Z^{d} \cong \big(C(\Omega)\rtimes_{\phi}\Z^{d-k}\big)\rtimes_{\theta} \Z^k$ 
with a new twist $\theta$ \cite{PR89,KR06}.
Hence for $d$ large enough and $1\leq k\leq d$ one can study the lower-dimensional dynamics 
 and topological invariants of the $\Z^k$-action.

\vspace{.1cm}

With the setup in place, let $B$ be a unital separable $C^*$-algebra, real or complex, and consider 
the (twisted) crossed 
product $B\rtimes_\theta \Z^k$ with respect to 
a $\Z^k$-action $\alpha$. This algebra is generated by
the elements $b\in B$ and unitary operators $\{S_j\}_{j=1}^k$ such that 
$S^n=S_1^{n_1}\cdots S_k^{n_k}$ for $n=(n_1,\ldots,n_k)\in\Z^k$ satisfy 
$$
  S^n b \;= \;\alpha_n(b)S^n\;,  \qquad S^mS^n\;=\; \theta(n,m)S^{m+n}
$$
for multi-indices $n,m\in\Z^k$  and $\theta:\Z^k\times\Z^k\to \calU(B)$ the 
twisting cocycle. Let $\calA$ denote the algebra of  
$\sum_{n\in\Z^k} S^n b_n$, where $(\|b_n\|)_{n\in\Z^k}$ 
is in the discrete Schwartz-space $\calS(\ell^2(\Z^k))$. 
The full crossed product completion $B\rtimes_\theta \Z^k$ is denoted by $A$.
Following~\cite{BKR1,PSBKK} one can build an unbounded 
Kasparov module encoding this action. First let us take the standard $C^*$-module
$\ell^2(\Z^k)\otimes B=\ell^2(\Z^k,B)$ with right-action given by right-multiplication 
and $B$-valued inner product
$$
  ( \psi_1\otimes b_1 \mid \psi_2\otimes b_2)_B 
\;=\; \langle \psi_1,\psi_2\rangle_{\ell^2(\Z^k)} 
    \, b_1^*b_2
\;.
$$
The module $\ell^2(\Z^k,B)$ has the frame
$\{\delta_{m}\otimes 1_B\}_{m\in\Z^k}$ where $\{\delta_m\}_{m\in\Z^d}$ is the canonical 
basis on $\ell^2(\Z^k)$. Then an action on generators is defined by
\begin{align*}
   b_1\cdot( \delta_m \otimes b_2) &\;=\; \delta_m \otimes \alpha_{-m}(b_1)b_2\;, 
\\
  S^n\cdot(\delta_m \otimes b) &\;=\; \theta(n,m)\cdot \delta_{m+n}\otimes b  
    \;=\; \delta_{m+n}\otimes \alpha_{-m-n}(\theta(n,m))b\;.
\end{align*}
It is shown in~\cite{BKR1,PSBKK} that this left-action extends to an adjointable 
action of the crossed product on $\ell^2(\Z^k,B)$.

\subsubsection{The spin and oriented Dirac operators} \label{sec:spin_and_oriented_modules}

Using the position operators $X_j(\delta_m \otimes b) = m_j \delta_m\otimes b$ 
one can now build an unbounded Kasparov module. To put things together, the
real Clifford algebras $C\ell_{r,s}$ are used. They are generated by $r$ self-adjoint 
elements $\{\gamma^j\}_{j=1}^r$ with $(\gamma^j)^2 = 1$ and $s$ skew-adjoint 
elements $\{\rho^i\}_{i=1}^s$ with $(\rho^i)^2 = -1$. Taking the complexification 
we have $C\ell_{r,s}\otimes \C = \C\ell_{r+s}$.

\vspace{.1cm}

In the complex case and $k$ even, we may use the irreducible Clifford representation 
of $\C\ell_k= \text{span}_\C\{\Gamma^j\}_{j=1}^k$ on the (trivial) spinor 
bundle $\mathfrak{S}$ over $\T^k$ to construct the unbounded 
operator $\sum_{j=1}^k X_j \hat\otimes \Gamma^j$ on $\ell^2(\Z^k,B)\hat\otimes \mathfrak{S}$.
After Fourier transform, this is the standard Dirac operator on the spinor bundle over the torus. 
More concretely, $\mathfrak{S}\cong \C^{2^{k/2}}$ with $\{\Gamma^j\}_{j=1}^k$ 
self-adjoint matrices satsifying $\Gamma^i\Gamma^j + \Gamma^j \Gamma^i = 2\delta_{i,j}$.
For odd $k$, one proceeds similarly, but there are two irreducible representations of 
$\C\ell_k$ on $\mathfrak{S}\cong\C^{2^{(k-1)/2}}$.

\begin{proposition} \label{prop:spin_Dirac_module}
Consider a twisted $\Z^k$-action $\alpha,\theta$ on a complex $C^*$-algebra 
$B$. Let $A$ be the associated crossed product with dense subalgebra 
$\calA$ of $\sum_{n\in\Z^k}S^nb_n$  with  $(b_n)_{n\in\Z^k}$ Schwartz-class coefficients.
For $\nu = 2^{\lfloor \frac{k}{2} \rfloor}$, the triple
$$
 \lambda^{\!\mathfrak{S}}_k \;=\; \bigg( \calA, \, \ell^2(\Z^k,B)_B\hat\otimes \C^{\nu}, \, 
    \sum_{j=1}^k X_j \hat\otimes \Gamma^j \bigg)
$$
is an unbounded Kasparov module that is even if $k$ is even 
with grading given by $\mathrm{Ad}_{\Gamma_0}$ for $\Gamma_0 = (-i)^{k/2}\Gamma^1\cdots\Gamma^k$,
and specifying an element of $KK(A,B)$.
The triple $\lambda_k^{\!\mathfrak{S}}$ is odd (ungraded) if $k$ is odd, representing a class
in $KK^1(A, B)=KK(  A \hat\otimes \C\ell_1, B)$ which can be specified by a
graded Kasparov module
\begin{equation}
\label{eq-OddKKRep}
  \bigg( \calA \hat\otimes \C\ell_1, \, \ell^2(\Z^k,B) \otimes \C^{2^{(k-1)/2}}\hat\otimes \C^2, \, 
    \begin{pmatrix} 0 & -i\sum_{j=1}^k X_j\hat\otimes \Gamma^k \\ i\sum_{j=1}^k X_j\hat\otimes \Gamma^k & 0 \end{pmatrix} \bigg), 
\end{equation}
where the grading on $\C^2$ is given by conjugating with $\begin{pmatrix} 1 & 0 \\0 & -1 \end{pmatrix}$,
and $\sigma_1 = \begin{pmatrix} 0 & 1 \\ 1 & 0 \end{pmatrix}$ generates the 
left $\C\ell_1$-action.
\end{proposition}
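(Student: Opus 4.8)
The plan is to verify directly the two analytic conditions in the definition of an unbounded Kasparov module, the adjointability of the left action of $\calA$ (equivalently of $A$) being already available from~\cite{BKR1,PSBKK} and its gradedness being automatic in the even case since $\calA$ acts only on the first tensor factor $\ell^2(\Z^k,B)$. Writing $D_k=\sum_{j=1}^k X_j\hat\otimes\Gamma^j$ for the candidate operator on $\ell^2(\Z^k,B)\hat\otimes\C^\nu$, the three things to check are: (i) $D_k$ is self-adjoint, regular and odd; (ii) $[D_k,\pi(a)]_\pm$ extends to an adjointable operator for every $a\in\calA$; and (iii) $\pi(a)(1+D_k^2)^{-1/2}\in\End_B^0$ for all $a\in A$.

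For (i), each $X_j$ is diagonal in the standard frame $\{\delta_m\otimes 1_B\}_{m\in\Z^k}$ with real eigenvalue $m_j$, so the $X_j$ are commuting self-adjoint regular operators with common core the finitely supported vectors. Since the $X_j$ are even and $\Gamma^i\Gamma^j+\Gamma^j\Gamma^i=2\delta_{ij}$, on this core one gets $D_k^2=\big(\sum_j X_j^2\big)\hat\otimes 1=N\hat\otimes 1$ with $N$ the number operator; hence $D_k$ is essentially self-adjoint with $\Dom(D_k)=\Dom(N^{1/2}\hat\otimes 1)$, and regularity of $D_k$ follows from that of $N$ (or from density of the ranges of $1\pm iD_k$, cf.~\cite[Ch.~9]{Lance}), exactly as in~\cite{BKR1}. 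Oddness in the even case is the statement that $\Gamma_0=(-i)^{k/2}\Gamma^1\cdots\Gamma^k$, which is self-adjoint with $\Gamma_0^2=1$ because $k$ is even, anticommutes with each $\Gamma^j$, so $\mathrm{Ad}_{\Gamma_0}(D_k)=-D_k$; this is a standard Clifford identity.

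For (ii) and (iii) it suffices to treat the generators of $\calA$ and then invoke Schwartz decay. A direct computation on $\delta_m\otimes b$ gives $[X_j,\pi(b)]=0$ for $b\in B$ (both $\pi(b)$ and $X_j$ act diagonally in $m$) and $[X_j,\pi(S^n)]=n_j\,\pi(S^n)$ from the explicit left action; hence $[D_k,\pi(b)]=0$ and $[D_k,\pi(S^n)]=\sum_j n_j\,\pi(S^n)\hat\otimes\Gamma^j$ is adjointable with norm controlled by $|n|$. For $a=\sum_n S^n b_n\in\calA$ this gives $[D_k,\pi(a)]=\sum_n\sum_j n_j(\pi(S^n)\hat\otimes\Gamma^j)\pi(b_n)$, which converges in norm because $(\|b_n\|)_n$ lies in the Schwartz space. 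For the compactness condition, $(1+D_k^2)^{-1/2}=(1+N)^{-1/2}\hat\otimes 1$ and $(1+N)^{-1/2}=\sum_{m\in\Z^k}(1+|m|^2)^{-1/2}\,\Theta_{\delta_m\otimes 1_B,\,\delta_m\otimes 1_B}$ is a norm-convergent sum of finite-rank operators (the tail over $|m|>R$ has norm $(1+R^2)^{-1/2}\to0$), so $(1+D_k^2)^{-1/2}\in\End_B^0$; since $\End_B^0$ is an ideal, $\pi(a)(1+D_k^2)^{-1/2}\in\End_B^0$ for every $a\in A$. Together with (i) this shows $\lambda_k^{\mathfrak{S}}$ is an unbounded Kasparov module, graded by $\mathrm{Ad}_{\Gamma_0}$ and hence even when $k$ is even, defining a class in $KK(A,B)$.

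For odd $k$ one passes to the $\C\ell_k$-representations on $\C^{2^{(k-1)/2}}$ and doubles: the operator $\begin{pmatrix}0 & -iD_k\\ iD_k & 0\end{pmatrix}$ on $\ell^2(\Z^k,B)\otimes\C^{2^{(k-1)/2}}\hat\otimes\C^2$ is self-adjoint, odd for the $\mathrm{Ad}_{\sigma_3}$-grading, and anticommutes with the odd generator $\sigma_1$ of the extra $\C\ell_1$, so the graded commutator with $\sigma_1$ vanishes, while the $\calA$-commutator bounds and the compactness are inherited from the computation above. This realises~\eqref{eq-OddKKRep} as a graded Kasparov module over $\calA\hat\otimes\C\ell_1$ and thus a class in $KK^1(A,B)$. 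The only genuinely delicate step is establishing regularity of $D_k$ as an operator on the $C^*$-module in~(i); everything else is the bookkeeping above, and it closely parallels~\cite{BKR1}.
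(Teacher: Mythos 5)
Your proposal is correct and follows essentially the same route as the paper's proof: the commutator condition is verified on generators via $[X_j,S^n b]=n_jS^n b$ and summed using the Schwartz decay, and compactness of $\pi(a)(1+D^2)^{-1/2}$ follows from writing $(1+|X|^2)^{-1/2}$ as a norm-convergent sum of the rank-one operators $\Theta_{\delta_m\otimes 1_B,\delta_m\otimes 1_B}$. The extra detail you supply on self-adjointness/regularity of $D_k$ and on the odd-case doubling with $\sigma_1$ is left implicit in the paper (it defers to~\cite{BKR1}), and your treatment of it is sound.
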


\begin{proof}
The algebra $\calA$ is trivially graded and one computes that
$$
 \Big[ X_j, \sum_{m\in\Z^k}  S^m b_m \Big] \;=\; \sum_{m\in\Z^k} m_j S^m b_m
\;,
$$
which is adjointable for $(\|b_m\|)_{m\in\Z^k}$ in the Schwartz space over $\Z^k$. Therefore 
the commutator $[\sum_{j=1}^k X_j\hat\otimes\Gamma^j, a\hat\otimes 1_{\C^\nu}]$ 
is adjointable for $a\in\calA$. 
The operator $(1+|X|^2)^{-s/2}$ acts diagonally with respect to the frame 
$\{\delta_m\otimes 1_B\}_{m\in\Z^k}$ on $\ell^2(\Z^k, B)$. In particular,
$$
   (1+|X|^2)^{-1/2} \;=\; \sum_{m\in\Z^k} (1+|m|^2)^{-1/2} 
     \Theta_{\delta_m\otimes 1_B,\delta_m\otimes 1_B} \;,
$$
which is a norm convergent sum of finite-rank operators and so it is 
compact on $\ell^2(\Z^k,B)$. In particular, $(1+|X|^2)^{-1/2}\hat\otimes 1_{\C^{\nu}}$ 
is compact on $\ell^2(\Z^k,B)\hat\otimes \C^{\nu}$.
\end{proof}

The triple $\lambda_k^{\!\mathfrak{S}}$ is the unbounded representative of the bounded 
Kasparov module constructed in~\cite{PSBKK}. 
The (trivial) spin structure on the torus is used to construct the Kasparov module 
$\lambda_k^{\!\mathfrak{S}}$ from Proposition \ref{prop:spin_Dirac_module}. One 
can also use the torus' oriented structure.
Following~\cite[{\S}2]{Kasparov80}, we consider $\bigwedge^*\R^k$ (or complex), 
which is a graded Hilbert space such that 
$\End_\R(\bigwedge^*\R^k) \cong C\ell_{0,k}\hat\otimes C\ell_{k,0}$, 
where the action of $C\ell_{0,k}$ and $C\ell_{k,0}$ is generated by the 
operators
\begin{align*}
  &\rho^j(w) \;=\; e_j\wedge w - \iota(e_j)w\;, 
  &&\gamma^j(w) \;=\; e_j \wedge w + \iota(e_j)w\;,
\end{align*}
where $\{e_j\}_{j=1}^{k}$ denotes the standard basis of $\R^{k}$,
$w\in\bigwedge^*\R^{k}$ and $\iota(v)w$ the contraction of $w$ along $v$ 
(using the inner-product on $\R^k$). 
A careful check also shows that $\gamma^j$ and $\rho^k$ graded-commute.
The grading of $\bigwedge^*\R^k$ can be expressed in terms of the grading 
operator
$$
  \gamma_{\bigwedge^* \R^k} \;=\; (-1)^k \rho^1\cdots \rho^k \hat\otimes \gamma^k\cdots \gamma^1.
$$
Kasparov also constructs a diagonal action of $\mathrm{Spin}_{0,k}$ (and $\mathrm{Spin}_{k,0}$)  
on $\End_{\R}(\bigwedge^*\R^k)$ \cite[{\S}2.18]{Kasparov80}, though this will not be needed here.

\begin{proposition}[\cite{BKR1}, Proposition 3.2] \label{prop:crossed_prod_kas_mod}
Consider a $\Z^k$-action $\alpha$  on a 
real or complex $C^*$-algebra 
$B$, possibly twisted by $\theta$. Let $A$ be the associated crossed product with dense subalgebra 
$\calA$ of elements $\sum_{n} S^n b_n$ with Schwartz-class coefficients.
The data
\begin{equation} \label{eq:spectral_trip} 
 \lambda_k \;=\; \bigg( \calA \hat\otimes C\ell_{0,k},\, \ell^2(\Z^{k},B)_B \hat\otimes 
  \bigwedge\nolimits^{\!\ast}\R^{k},\, \sum_{j=1}^{k} X_j \hat\otimes \gamma^j \bigg)
\end{equation}
defines an unbounded $A\hat\otimes C\ell_{0,k}$-$B$ Kasparov module
and class in $KKO(A\hat\otimes C\ell_{0,k},B)$ which is also denoted $KKO^k(A,B)$. 
The $C\ell_{0,k}$-action is generated by the operators $\rho^j$.
In the complex case, one has to replace $\C\ell_k$ and $\bigwedge^*\C^k$ in the above formula.
\end{proposition}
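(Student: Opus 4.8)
The plan is to verify the three axioms of an unbounded Kasparov module directly, following the template of the proof of Proposition~\ref{prop:spin_Dirac_module} and inserting the extra bookkeeping forced by the two graded-commuting Clifford actions $\{\gamma^j\}$ and $\{\rho^j\}$ on $\mywedge\R^k$. First I would fix the left action: the subalgebra $\calA\subset A$ acts on $\ell^2(\Z^k,B)$ exactly as in Section~\ref{sec-semifiniteconst} (this is already known to extend to an adjointable action of all of $A$), while $C\ell_{0,k}$ acts on $\mywedge\R^k$ through the generators $\rho^j$. Since the grading on $\ell^2(\Z^k,B)$ is trivial, all sign factors entering the graded tensor product $\calA\hat\otimes C\ell_{0,k}$ collapse, and one checks without difficulty that $a\hat\otimes c\mapsto(\text{action of }a)\hat\otimes\rho(c)$ is a graded $\ast$-homomorphism $\calA\hat\otimes C\ell_{0,k}\to\End_B\big(\ell^2(\Z^k,B)\hat\otimes\mywedge\R^k\big)$; in the real case the untwisted crossed product together with the real operators $X_j$, $\gamma^j$, $\rho^j$ respects the real structure on $\mywedge\R^k$, so no additional care is needed there.

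Next I would analyse $D=\sum_{j=1}^k X_j\hat\otimes\gamma^j$. Each $X_j$ acts diagonally on the frame $\{\delta_m\otimes 1_B\}$, hence is self-adjoint and regular with the $X_i$ pairwise commuting, while each $\gamma^j$ is a self-adjoint odd operator on $\mywedge\R^k$ satisfying $\gamma^i\gamma^j+\gamma^j\gamma^i=2\delta_{ij}$; consequently $D$ is self-adjoint, regular and odd, and $D^2=|X|^2\hat\otimes 1_{\mywedge\R^k}$. Therefore $(1+D^2)^{-1/2}=(1+|X|^2)^{-1/2}\hat\otimes 1$, and exactly as in Proposition~\ref{prop:spin_Dirac_module} the operator
$$
(1+|X|^2)^{-1/2}\;=\;\sum_{m\in\Z^k}(1+|m|^2)^{-1/2}\,\Theta_{\delta_m\otimes 1_B,\,\delta_m\otimes 1_B}
$$
is a norm-convergent sum of finite-rank endomorphisms, hence lies in $\End_B^0(\ell^2(\Z^k,B))$, so $\pi(y)(1+D^2)^{-1/2}$ is compact on $\ell^2(\Z^k,B)\hat\otimes\mywedge\R^k$ for every $y\in A\hat\otimes C\ell_{0,k}$.

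For the commutator axiom I would split $y\in\calA\hat\otimes C\ell_{0,k}$ into its two tensor legs. On the $\calA$-leg, a direct computation gives $[X_j,\sum_m S^m b_m]=\sum_m m_j S^m b_m$, which still has Schwartz-class coefficients, so $[D,a\hat\otimes 1]_\pm$ is adjointable for $a\in\calA$, just as in the spin case. On the $C\ell_{0,k}$-leg, the graded-commutation of $\rho^j$ with every $\gamma^i$ (noted after the definition of the two actions) forces $[D,1\hat\otimes\rho^j]_\pm=\sum_i X_i\hat\otimes(\gamma^i\rho^j+\rho^j\gamma^i)=0$, so the $C\ell_{0,k}$-part contributes nothing. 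Combining the two via the graded Leibniz rule shows $[D,\pi(y)]_\pm\in\End_B(\ell^2(\Z^k,B)\hat\otimes\mywedge\R^k)$ for all $y\in\calA\hat\otimes C\ell_{0,k}$, and collecting the above gives the asserted class in $KKO(A\hat\otimes C\ell_{0,k},B)=KKO^k(A,B)$; the complex case is identical after replacing $C\ell_{0,k}$ and $\mywedge\R^k$ by $\C\ell_k$ and $\mywedge\C^k$. I expect the only genuinely delicate point to be this interplay of the two Clifford actions --- keeping the graded-tensor-product signs straight so that $\rho$ graded-commutes with $D$ and so that the combined left action is an honest graded representation --- since the summability and compactness estimates are then formally the same as for $\lambda_k^{\!\mathfrak{S}}$.
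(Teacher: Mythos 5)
Your argument is correct and follows essentially the same route as the paper: the paper defers this statement to \cite{BKR1}, but its proof of Proposition~\ref{prop:spin_Dirac_module} is exactly the template you use (adjointability of $[X_j,\sum_m S^m b_m]=\sum_m m_j S^m b_m$ for Schwartz coefficients, and compactness of $(1+|X|^2)^{-1/2}$ as a norm-convergent sum of finite-rank module endomorphisms), and your only genuinely new ingredient --- that the left $C\ell_{0,k}$-action via $\rho^j$ graded-commutes with $D=\sum_j X_j\hat\otimes\gamma^j$ so the commutator condition reduces to the $\calA$-leg --- is precisely the observation the paper records when noting that $\gamma^j$ and $\rho^i$ graded-commute. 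No gaps worth flagging.
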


For complex algebras and spaces, we have constructed two (complementary) Kasparov 
modules, $\lambda_k^{\!\mathfrak{S}}$ and $\lambda_k$. We have done this to 
 better align our results with existing 
literature on the topic, in particular~\cite{PSBbook, PSBKK}. 
In the case $k=1$, these Kasparov modules 
directly coincide. 

\vspace{.1cm}

For higher $k$, we can explicitly connect $\lambda_k^{\!\mathfrak{S}}$ 
and $\lambda_k$ by a Morita equivalence bimodule~\cite{Plymen86,LRV12}.
For $k$ even, there is an isomorphism $\C\ell_k \to \End(\C^{2^{k/2}})$ by Clifford 
multiplication. This observation implies that $\C^{2^{k/2}}$ is a $\Z_2$-graded Morita equivalence 
bimodule between $\C\ell_k$ and $\C$, where we equip $\C^{2^{k/2}}$ with 
a left $\C\ell_k$-valued inner-product ${}_{\C\ell_k}(\cdot\mid\cdot)$ such that 
${}_{\C\ell_k}(w_1\mid w_2)\cdot w_3 = w_1 \langle w_2, w_3 \rangle_{\C^\nu}$.
This bimodule gives an invertible 
class $[(\C\ell_k, \C^{2^{k/2}}_\C, 0)] \in KK(\C\ell_k,\C)$.  
One can take the external product of $\lambda_k^{\!\mathfrak{S}}$ with this class on the 
right to obtain (complex) $\lambda_k$. That is, 
$$
[\lambda_k^{\!\mathfrak{S}}] \hat\otimes_\C [(\C\ell_k, \C^{2^{k/2}}, 0)] 
\;=\; [\lambda_k] 
    \;\in\; KK(A\hat\otimes \C\ell_k,B)\;.
$$
Similarly $[\lambda_k^{\!\mathfrak{S}}] = [\lambda_k] \hat\otimes [(\C,(\C^{2^{k/2}})^*_{\C\ell_k},0)]$ 
with $(\C^{2^{k/2}})^*_{\C\ell_k}$ the conjugate module providing the 
inverse to $[(\C\ell_k, \C^{2^{k/2}}_\C, 0)]$, see \cite{RW} for more 
details on Morita equivalence bimodules.

\vspace{.1cm}

For $k$ odd we use the graded Kasparov module \eqref{eq-OddKKRep} instead of $\lambda_k^{\!\mathfrak{S}}$.
We can again compose this graded Kasparov module with the $KK$-class 
from the Morita equivalence bimodule
$( \C\ell_{k-1}, \C^{2^{(k-1)/2}}_\C, 0 )$. The external product 
gives $[\lambda_k]\in KK^k(A,B)$.
 Hence from an index-theoretic perspective, the Kasparov modules $\lambda_k^{\!\mathfrak{S}}$ 
and $\lambda_k$ are equivalent up to a normalisation coming from the spinor dimension.

\vspace{.1cm}

In the case of real spaces and algebras, a similar (but more involved) 
equivalence also holds for real spinor representations. Namely, 
for $\mathbb{K} =\R$, $\C$ or $\mathbb{H}$, there is a 
unique irreducible representation
$C\ell_{r,s} \to \End_{\mathbb{K}}(\mathfrak{S}_\mathbb{K})$ 
if $s-r+1$ is not a multiple of 
$4$, otherwise there are $2$ irreducible 
representations~\cite[Chapter 1, Theorem 5.7]{SpinGeo}. 
To relate these modules to $\bigwedge^*\R^k$, one also uses that 
$\C \cong \R^2$ and $\mathbb{H} \cong \R^4$. Obviously there are 
more cases to check in the real setting, but because we do not 
use the spin Kasparov module in the real case, the full details 
are beyond the scope of this paper.

\vspace{.1cm}

In order to consider weak invariants in the real case, we will often
go beyond the limits of semifinite index theory and will need to work with the
Kasparov modules and $KK$-classes directly. In such a setting, we prefer 
to work with the `oriented' Kasparov module $\lambda_k$ for 
several reasons:
\begin{enumerate}
  \item The oriented structure, $\bigwedge^*\R^k$, and its corresponding Clifford 
  representations is at the heart of Kasparov theory and, for example, plays a key role in the 
  proof of Bott periodicity~\cite[{\S}5]{Kasparov80} and 
  Poincar\'{e} duality~\cite[{\S}4]{KasparovNovikov}. 
  This is also evidenced in Theorem \ref{thm:bulk-edge} below (also compare with 
  \cite{FR15}, where to achieve factorisation of equivariant (spin) spectral triples, 
  a `middle module' is required that plays of the role of the complex Morita equivalence 
  linking $\lambda_k^{\!\mathfrak{S}}$ and $\lambda_k$ for complex algebras).
  \item The Clifford actions of $C\ell_{0,k}$ and $C\ell_{k,0}$ on $\bigwedge^*\R^k$ 
  are explicit. This makes the Clifford representations more amenable to 
  the Kasparov product as well as the Clifford index used to define 
  real weak invariants (see Section \ref{sec:real_pairings}).
\end{enumerate}

\subsubsection{Kasparov module to semifinite spectral triple}

Returning to the example $B=C(\Omega)\rtimes_\phi\Z^{d-k}$, it will be assumed that 
$\Omega$ possesses a probability measure $\bP$ that is invariant under the $\Z^d$-action 
and $\mathrm{supp}(\bP) = \Omega$. Hence $\bP$ induces a faithful trace on $C(\Omega)$ 
and $C(\Omega)\rtimes_\phi \Z^{d-k}$ by the formula
$$
    \tau\Big(\sum_{m\in\Z^{d-k}} {S}^m g_m \Big) 
\;= \;
\int_{\Omega} g_0(\omega)\,\mathrm{d}\bP(\omega)
\;.
$$
Thus, we will assume from now on that our generic algebra $B$ has a faithful and 
norm lower semicontinuous trace, $\tau_B$, that is 
invariant under the $\Z^k$-action.
This trace now allows to construct a semifinite spectral triple from the above Kasparov 
module. We first construct the GNS space $L^2(B,\tau_B)$ and consider the new
Hilbert space $\ell^2(\Z^k)\otimes L^2(B,\tau_B)$. Let us note that
$\ell^2(\Z^k)\otimes L^2(B,\tau_B) \cong \ell^2(\Z^k,B)\otimes_B L^2(B,\tau_B)$ 
so the adjointable action of $A=B\rtimes_\theta\Z^k$ on $\ell^2(\Z^k,B)$ extends 
to a representation of $A$ on $\ell^2(\Z^k)\otimes L^2(B,\tau_B)$.

\begin{proposition}[\cite{LN04}, Theorem 1.1]
Given $T\in\End_{B}(\ell^2(\Z^k,B))$ with $T\geq 0$, define
$$  
\Tr_\tau(T) \;=\; \sup_{I} \sum_{\xi\in I}\tau_B\!\left[( \xi\mid T\xi)_{B}\right]\;, 
$$
where the supremum is taken over all finite subsets $I\subset \ell^2(\Z^k,B)$ with
$\sum_{\xi\in I}\Theta_{\xi,\xi}\leq 1$. 
\begin{enumerate}
 \item Then $\Tr_\tau$ is a semifinite norm lower semicontinuous 
trace on the compact endomorphisms $\End_{B}^0(\ell^2(\Z^k,B))$
with the property $\Tr_{\tau}(\Theta_{\xi_1,\xi_2}) = \tau_B[(\xi_2\mid \xi_1)_{B}]$.
 \item Let $\calN$ be the von Neumann algebra $\End_{B}^{00}(\ell^2(\Z^k,B))''\subset 
\calB[\ell^2(\Z^k)\otimes L^2(B,\tau_B)]$. 
Then the trace $\Tr_\tau$ extends to a faithful semifinite trace on the positive cone $\calN_+$. 
\end{enumerate}
\end{proposition}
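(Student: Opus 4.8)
The plan is to bootstrap the statement from the coefficient trace $\tau_B$, but to do the bookkeeping at the von Neumann algebra level, where normality is available. First I would extend $\tau_B$ to a faithful normal semifinite trace $\ol{\tau}_B$ on the GNS von Neumann algebra $\calM=\pi_{\tau_B}(B)''\subset\calB(L^2(B,\tau_B))$; this is the classical passage from a norm lower semicontinuous semifinite trace on a $C^*$-algebra to a normal one on its GNS closure. Interior-tensoring the module $\ell^2(\Z^k,B)$ with $L^2(B,\tau_B)$ yields the Hilbert space $\ell^2(\Z^k)\otimes L^2(B,\tau_B)$ already mentioned above, $\End_B^{00}(\ell^2(\Z^k,B))$ acts on it, and $\calN$ is by definition the weak closure of this action. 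The point of working inside $\calM$ is that the frame $\{e_m\}_{m\in\Z^k}:=\{\delta_m\otimes 1_B\}_{m\in\Z^k}$, which satisfies $\sum_{m}\Theta_{e_m,e_m}=1$ strictly, lets us rewrite the supremum defining $\Tr_\tau$ as an honest sum, and normality of $\ol{\tau}_B$ legitimises the rearrangements below.

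\textit{Part (1).} For $T\ge0$ in $\End_B(\ell^2(\Z^k,B))$ set $\varphi(T)=\sum_{m\in\Z^k}\tau_B\big[(e_m\mid Te_m)_B\big]\in[0,\infty]$. Additivity, positive homogeneity and monotonicity of $\varphi$ on the positive cone are immediate, and norm lower semicontinuity follows from that of $\tau_B$, applied to each term of a finite partial sum and then passed to the supremum over finite index sets. The rank-one evaluation $\varphi(\Theta_{\xi,\xi})=\sum_m\tau_B[(e_m\mid\xi)_B(\xi\mid e_m)_B]=\tau_B[(\xi\mid\xi)_B]$ comes from the trace property of $\ol{\tau}_B$ together with $\sum_m\Theta_{e_m,e_m}=1$ and normality; extending $\varphi$ linearly to $\End_B^{00}$ and polarising both in the module and in $B$ gives $\varphi(\Theta_{\xi_1,\xi_2})=\tau_B[(\xi_2\mid\xi_1)_B]$. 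Next, $\varphi=\Tr_\tau$: the finite sets $\{e_m:m\in J\}$ show $\Tr_\tau(T)\ge\varphi(T)$, while for any finite $I$ with $P:=\sum_{\xi\in I}\Theta_{\xi,\xi}\le1$ the rank-one formula and additivity of $\varphi$ give $\sum_{\xi\in I}\tau_B[(\xi\mid T\xi)_B]=\varphi\big(\sum_{\xi\in I}\Theta_{T^{1/2}\xi,T^{1/2}\xi}\big)=\varphi(T^{1/2}PT^{1/2})\le\varphi(T)$ by monotonicity, and taking the supremum over $I$ yields $\Tr_\tau(T)\le\varphi(T)$. The trace property is then cleanest via \emph{frame independence} of $\varphi$: for a second frame $\{f_j\}$ one has $\varphi_{\{e_m\}}(T)=\sum_{m,j}\tau_B[(e_m\mid T^{1/2}f_j)_B(f_j\mid T^{1/2}e_m)_B]=\varphi_{\{f_j\}}(T)$ after a rearrangement and a use of the trace property of $\ol{\tau}_B$ (here normality of $\ol{\tau}_B$ is what makes the double-sum manipulation rigorous); applying this to $\{u e_m\}$ for a unitary $u\in\End_B(\ell^2(\Z^k,B))$ gives $\Tr_\tau(u^*Tu)=\Tr_\tau(T)$, and polar decomposition together with norm lower semicontinuity upgrades this to $\Tr_\tau(ST)=\Tr_\tau(TS)$ on $\End_B^0(\ell^2(\Z^k,B))$. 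Finally, $\Tr_\tau$ is semifinite there because the $\ast$-subalgebra spanned by the $\Theta_{e_m c,\,e_m c}$ with $c$ in the dense domain of finiteness of $\tau_B$ is norm dense and $\Tr_\tau(\Theta_{e_m c,e_m c})=\tau_B(c^*c)<\infty$; in the application at hand $\tau_B$ is a state, so every rank-one operator already has finite trace.

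\textit{Part (2).} Here one uses that the GNS construction for $\Tr_\tau$ reproduces $\ell^2(\Z^k,B)\otimes_B L^2(B,\tau_B)\cong\ell^2(\Z^k)\otimes L^2(B,\tau_B)$ with $\End_B^0(\ell^2(\Z^k,B))$ acting by the given representation, so that $\calN=\End_B^{00}(\ell^2(\Z^k,B))''$ is precisely the weak closure. The weight $\Tr_\tau$ is the $\ell^2(\Z^k)$-amplification of the normal semifinite trace $\ol{\tau}_B$ along the frame, and the standard fact that such an amplification of a normal semifinite trace is again a normal semifinite trace extends it verbatim to a trace on $\calN_+$. Faithfulness transfers from $\tau_B$: $\Tr_\tau$ is already faithful on $\End_B^0(\ell^2(\Z^k,B))$ since $\Tr_\tau(T)=0$, $T\ge 0$ compact, forces $(T^{1/2}e_m\mid T^{1/2}e_m)_B=0$ for all $m$ by faithfulness of $\tau_B$, hence $T=0$; faithfulness of the normal extension to $\calN$ then follows from faithfulness on the $\sigma$-weakly dense ideal together with normality.

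\textit{Main obstacle.} The one genuinely delicate point is that $\tau_B$ is merely norm lower semicontinuous, so the interchanges of summation with $\tau_B$ and the double-sum rearrangement in the proof of frame independence are not automatic; this is exactly why the argument first replaces $\tau_B$ by its normal extension $\ol{\tau}_B$ and performs those steps at the von Neumann level, where normality provides continuity along bounded monotone nets. Modulo this, the argument is a localisation of the construction in \cite{LN04}, and one may alternatively simply invoke \cite[Theorem 1.1]{LN04}.
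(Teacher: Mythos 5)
The paper gives no argument for this proposition at all: it is imported verbatim as Theorem 1.1 of Laca--Neshveyev \cite{LN04}, i.e.\ the general construction of the dual trace on the compact endomorphisms of a countably generated $C^*$-module and its normal extension to $\End_B^{00}(\ell^2(\Z^k,B))''$. Your proposal is therefore a genuinely different route: you localise the statement to the case at hand, exploiting that $\ell^2(\Z^k,B)$ is the standard module with the orthonormal frame $\{\delta_m\otimes 1_B\}$, so that the supremum defining $\Tr_\tau$ collapses to the frame sum $\varphi$, traciality follows from frame independence, and part (2) reduces to identifying $\calN$ with $\calB(\ell^2(\Z^k))\bar\otimes\pi_{\tau_B}(B)''$ and $\Tr_\tau$ with the restriction of $\Tr\otimes\ol{\tau}_B$. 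The individual steps are sound: the $\varphi=\Tr_\tau$ comparison via $T^{1/2}PT^{1/2}\le T$, the rank-one evaluation, and the Tonelli-type rearrangement (all terms are of the form $\tau_B(b^*b)\ge 0$, so only positivity and lower semicontinuity/normality are needed) are exactly the right ingredients. What your approach buys is a self-contained, transparent proof and an explicit description of $(\calN,\Tr_\tau)$ as an amplification, which is in fact how the trace is used later in Lemma~\ref{lem:residue_trace_formula_part1}; what the citation buys is generality (arbitrary modules and non-orthogonal frames) and no need to verify the tensor-product identification.

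One sentence should be repaired: faithfulness of the normal extension does \emph{not} follow from ``faithfulness on the $\sigma$-weakly dense ideal together with normality'' as stated, because $\End_B^0(\ell^2(\Z^k,B))$ is an ideal in $\End_B(\ell^2(\Z^k,B))$ but not in $\calN$, and faithfulness of a normal trace on a merely $\sigma$-weakly dense $C^*$-subalgebra does not in general imply faithfulness on the weak closure (the support projection of the trace need not be seen by the subalgebra). In your argument this is harmless: the preceding identification of $\Tr_\tau$ on $\calN_+$ with $\Tr\otimes\ol{\tau}_B$ on $\calB(\ell^2(\Z^k))\bar\otimes\pi_{\tau_B}(B)''$ already yields faithfulness, since $\ol{\tau}_B$ is faithful when $\tau_B$ is and a tensor product of faithful normal semifinite traces is faithful; that identification, not density, should be cited as the reason.
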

Recall that the operator $(1+|X|^2)$ acts diagonally on the frame $\{\delta_m \otimes 1_B\}_{m\in \Z^k}$,
so 
$$
  (1+|X|^2)^{-s/2} \;=\; \sum_{m\in\Z^k} (1+|m|^2)^{-s/2} \Theta_{ \delta_m \otimes 1_B, \delta_m \otimes 1_B}. 
$$
Using the properties $\Tr_\tau$, one can compute that
\begin{align*}
  \Tr_\tau \big((1+|X|^2)^{-s/2}\big) 
  &\;=\; \sum_{m\in\Z^k}(1+|m|^2)^{-s/2}\, 
        \tau_B((\delta_m\otimes 1_B\mid \delta_m\otimes 1_B)_B) \\
     &\;= \; \sum_{m\in\Z^k}(1+|m|^2)^{-s/2}\,\tau_B(1_B).
\end{align*}
This observation and a little more work gives the following result.

\begin{proposition}[\cite{BKR1}, Proposition 5.8] \label{prop:bulk_Kasmod_spec_trip}
For $\calA\subset B \rtimes_\theta \Z^k$ the algebra of operators 
$\sum_{n\in\Z^k}S^n b_n$ with Schwartz-class coefficients, the tuple
$$
\bigg(\calA\hat\otimes C\ell_{0,k},\, \ell^2(\Z^k)\otimes L^2(B, \tau_B)\hat\otimes \bigwedge\nolimits^{\!*}\R^k,\, 
   \sum_{j=1}^k X_j\otimes 1\hat\otimes \gamma^j  \bigg) 
$$
is a $QC^\infty$ and $k$-summable semifinite spectral triple relative to 
$\calN\hat\otimes \End(\bigwedge^*\R^k)$ with trace 
$\Tr_\tau \hat\otimes \Tr_{\bigwedge^*\R^k}$.
\end{proposition}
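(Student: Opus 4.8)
The plan is to produce this semifinite spectral triple by localising the unbounded Kasparov module $\lambda_k$ of Proposition~\ref{prop:crossed_prod_kas_mod} along the invariant trace $\tau_B$, using the semifinite trace $\Tr_\tau$ of~\cite{LN04} recalled above. First I would set up the data. Under the identification $\ell^2(\Z^k)\otimes L^2(B,\tau_B)\cong\ell^2(\Z^k,B)\otimes_B L^2(B,\tau_B)$, the adjointable left-action of $A=B\rtimes_\theta\Z^k$ on $\ell^2(\Z^k,B)$ together with the $C\ell_{0,k}$-action by the operators $\rho^j$ descends to a graded representation of $\calA\hat\otimes C\ell_{0,k}$ on $\ell^2(\Z^k)\otimes L^2(B,\tau_B)\hat\otimes\mywedge\R^k$, and the Kasparov-module operator $\sum_{j=1}^{k}X_j\hat\otimes\gamma^j$ descends to $D=\sum_{j=1}^{k}X_j\otimes 1\hat\otimes\gamma^j$, which is self-adjoint and affiliated to $\calN\hat\otimes\End(\mywedge\R^k)$ since it acts diagonally on the frame $\{\delta_m\otimes 1_B\}_{m\in\Z^k}$. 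The structural fact driving all estimates is that the Clifford relations give $D^2=|X|^2\otimes 1\hat\otimes 1$, so $(1+D^2)^{t}=(1+|X|^2)^{t}\otimes 1\hat\otimes 1$ remains diagonal on the frame for every real power $t$.

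Next I would check the two spectral triple axioms. Boundedness of $[D,a]_\pm$ for $a\in\calA\hat\otimes C\ell_{0,k}$ is inherited from the Kasparov-module property of $\lambda_k$: the commutator $[\sum_{j}X_j\hat\otimes\gamma^j,\,a\hat\otimes 1]$ is an adjointable endomorphism of $\ell^2(\Z^k,B)\hat\otimes\mywedge\R^k$, hence bounded, and tensoring with the identity on $L^2(B,\tau_B)$ preserves boundedness, while each $\rho^j$ graded-commutes with each $\gamma^j$ so the $C\ell_{0,k}$-generators contribute nothing new. For the second axiom I would use the computation recalled before the statement: $(1+|X|^2)^{-1/2}=\sum_{m\in\Z^k}(1+|m|^2)^{-1/2}\Theta_{\delta_m\otimes 1_B,\delta_m\otimes 1_B}$ is a norm-convergent sum of rank-one endomorphisms, and since $\tau_B$ is finite on $1_B$ (it is induced by a probability measure in the setting of interest) each $\Theta_{\delta_m\otimes 1_B,\delta_m\otimes 1_B}$ has finite $\Tr_\tau$, so $(1+|X|^2)^{-1/2}\in\calK_\calN$; tensoring with the finite-dimensional factor $\End(\mywedge\R^k)$ and multiplying by any $a\in A$ keeps the product $\tau$-compact.

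For the analytic hypotheses, $k$-summability is immediate from the diagonal form, since $(\Tr_\tau\hat\otimes\Tr_{\mywedge\R^k})\big((1+D^2)^{-s/2}\big)=2^k\,\tau_B(1_B)\sum_{m\in\Z^k}(1+|m|^2)^{-s/2}$ converges exactly when $s>k$. For the $QC^\infty$ property it suffices to verify $a,[D,a]\in\bigcap_{p\ge 0}\Dom(\delta^p)$ with $\delta(T)=[(1+D^2)^{1/2},T]=[(1+|X|^2)^{1/2}\otimes 1\hat\otimes 1,T]$ on the generators $a=S^{n}b$, $b\in B$, and then to sum against the Schwartz weights. Since $(1+|X|^2)^{1/2}$ is a function of $X$ and $S^{n}$ shifts $X$ by $n$, each $\delta^p(S^{n}b)$ is a bounded operator built from the $p$-fold finite difference of $m\mapsto(1+|m|^2)^{1/2}$ in the direction $n$ composed with $S^{n}b$, with norm controlled by a polynomial in $|n|$; the same holds for $\delta^p([D,S^{n}b])=\delta^p\big(\sum_{j}n_j\,(1\hat\otimes\gamma^j) S^{n}b\big)$. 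Summing over $n$ and using the rapid decay of $(\|b_n\|)_{n\in\Z^k}$ gives absolute convergence, placing both $\calA\hat\otimes C\ell_{0,k}$ and $[D,\calA\hat\otimes C\ell_{0,k}]$ in the smooth domain $\bigcap_p\Dom(\delta^p)$. The main obstacle I anticipate is precisely this last step: establishing the uniform polynomial-in-$|n|$ bounds on the iterated commutators and controlling the resulting lattice sums; the rest is a routine transcription of Proposition~\ref{prop:crossed_prod_kas_mod} and of the $\tau$-compactness computation recalled above. (This is~\cite[Proposition 5.8]{BKR1}.)
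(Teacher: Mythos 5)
Your proposal is correct and follows essentially the same route as the paper (and the cited \cite{BKR1}): localise the Kasparov module $\lambda_k$ through the dual trace $\Tr_\tau$ of \cite{LN04}, inherit boundedness of commutators from adjointability on $\ell^2(\Z^k,B)$, get $\tau$-compactness and $k$-summability from the diagonal frame computation with $\tau_B(1_B)<\infty$, and prove $QC^\infty$ by the finite-difference estimate $\delta^p(S^nb)=S^nb\,g_n(X)^p$ with $|g_n(m)|\le|n|$ summed against the Schwartz coefficients. The step you flag as the main obstacle is indeed where the work lies, but your Lipschitz bound on $m\mapsto(1+|m|^2)^{1/2}$ settles it exactly as in the cited argument.
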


We have the analogous result for the spin Dirac operator.
\begin{proposition} \label{prop:spin_semifinite_trip}
The tuple
$$
  \bigg( \calA, \, \ell^2(\Z^k) \otimes L^2(B,\tau_B) \hat\otimes \C^\nu, \, 
    \sum_{j=1}^k X_j\otimes 1 \hat\otimes \Gamma^j \bigg)
$$
is a $QC^\infty$ and $k$-summable complex semifinite spectral triple relative to 
$\calN\hat\otimes \End(\C^\nu)$ with trace $\Tr_\tau \hat\otimes \Tr_{\C^\nu}$. The 
spectral triple is even if $k$ is even with grading operator 
$\Gamma_0 = (-i)^{k/2}\Gamma^1\cdots\Gamma^k$. The spectral triple is odd if $k$ is odd.
\end{proposition}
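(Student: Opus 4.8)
The plan is to run the construction that produced the oriented semifinite spectral triple of Proposition~\ref{prop:bulk_Kasmod_spec_trip}, now starting from the spin Kasparov module $\lambda_k^{\!\mathfrak{S}}$ of Proposition~\ref{prop:spin_Dirac_module} and replacing the oriented fibre $\bigwedge^*\R^k$, with its $C\ell_{0,k}\hat\otimes C\ell_{k,0}$-action, by the spinor space $\C^\nu$ carrying the irreducible $\C\ell_k$-action generated by $\{\Gamma^j\}_{j=1}^k$. First I would feed $\lambda_k^{\!\mathfrak{S}}$ into the Laca--Neshveyev construction (Theorem~1.1 of~\cite{LN04}, recalled above): the invariant trace $\tau_B$ produces the dual trace $\Tr_\tau$ on $\End_B^0(\ell^2(\Z^k,B))$ and the von Neumann algebra $\calN = \End_B^{00}(\ell^2(\Z^k,B))'' \subset \calB[\ell^2(\Z^k)\otimes L^2(B,\tau_B)]$. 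Using the identification $\ell^2(\Z^k)\otimes L^2(B,\tau_B) \cong \ell^2(\Z^k,B)\otimes_B L^2(B,\tau_B)$, the adjointable left action of $A$ on $\ell^2(\Z^k,B)$ descends to a representation of $A$ on $\ell^2(\Z^k)\otimes L^2(B,\tau_B)$; tensoring with the fixed finite-dimensional factor $\C^\nu$ yields the Hilbert space of the statement together with the graded von Neumann algebra $\calN\hat\otimes\End(\C^\nu)$ and its trace $\Tr_\tau\hat\otimes\Tr_{\C^\nu}$. The operator $D = \sum_{j=1}^k X_j\otimes 1\hat\otimes\Gamma^j$ is then self-adjoint and affiliated to $\calN\hat\otimes\End(\C^\nu)$, the diagonal position operators $X_j$ being norm limits of finite-rank truncations lying in $\End_B^{00}(\ell^2(\Z^k,B))$, exactly as for $\lambda_k$.

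For the spectral triple axioms the decisive simplification is that the Clifford relations $\Gamma^i\Gamma^j + \Gamma^j\Gamma^i = 2\delta_{i,j}$ give $D^2 = |X|^2\otimes 1\hat\otimes 1_{\C^\nu}$, so that $(1+D^2)^{-s/2} = (1+|X|^2)^{-s/2}\otimes 1\hat\otimes 1_{\C^\nu}$; the computation preceding Proposition~\ref{prop:bulk_Kasmod_spec_trip} then gives $(\Tr_\tau\hat\otimes\Tr_{\C^\nu})\big((1+D^2)^{-s/2}\big) = \nu\,\tau_B(1_B)\sum_{m\in\Z^k}(1+|m|^2)^{-s/2}$, which is finite precisely for $s>k$. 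This simultaneously yields $k$-summability and, since $a$ is bounded for $a\in A$, the condition $a(1+D^2)^{-1/2}\in\calK_{\calN\hat\otimes\End(\C^\nu)}$. Boundedness of the commutators $[D,a]$ (the graded commutator when $k$ is even, the ordinary one when $k$ is odd) is inherited from Proposition~\ref{prop:spin_Dirac_module}: one has $[D,a] = \sum_{j=1}^k[X_j,a]\hat\otimes\Gamma^j$ with $[X_j,\sum_m S^m b_m] = \sum_m m_j S^m b_m$, and Schwartz decay of $(b_m)_{m\in\Z^k}$ keeps $(m_j b_m)_{m\in\Z^k}$ Schwartz, hence the operator bounded, a bound preserved on passing to the GNS representation.

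Finally, $QC^\infty$ should go through verbatim as in the oriented case: $\delta(T) = [(1+|X|^2)^{1/2}\otimes 1\hat\otimes 1_{\C^\nu},T]$ acts nontrivially only on the $\ell^2(\Z^k)$ tensor factor, because $(1+|X|^2)^{1/2}$ commutes with the $L^2(B,\tau_B)$ factor and with the spinor matrices, so iterating $\delta$ on $a$ and on $[D,a]$ again yields operators of the form $\sum_m S^m c_m$ whose coefficient sequences arise from $(b_m)$ by multiplication with sequences of polynomial growth in $m$ --- hence Schwartz, hence bounded. For $k$ even, $\Gamma_0 = (-i)^{k/2}\Gamma^1\cdots\Gamma^k$ is self-adjoint, squares to $1$, anticommutes with each $\Gamma^j$ and therefore with $D$, and commutes with $\calA$ and the GNS factor, so it grades the triple; for $k$ odd there is no such operator and the triple is odd, the choice between the two inequivalent representations $\{\Gamma^j\}$ of $\C\ell_k$ being irrelevant to the spectral-triple property. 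I do not anticipate a genuine obstacle: every analytic estimate reduces to one already carried out for the position operator in Proposition~\ref{prop:bulk_Kasmod_spec_trip}, and the only new feature is the smaller fibre dimension $\nu = 2^{\lfloor k/2\rfloor}$ together with the use of the complex rather than the real Clifford representation, which --- $\End(\C^\nu)$ being finite-dimensional and acting on a separate tensor factor --- alters nothing but the normalisation $\Tr_{\C^\nu}(1_{\C^\nu}) = \nu$. The only points requiring mild care are the grading in the even case and the affiliation of $D$; alternatively, for $k$ even one may instead transport the statement from Proposition~\ref{prop:bulk_Kasmod_spec_trip} along the complex Morita equivalence bimodule $(\C\ell_k,\C^{2^{k/2}}_\C,0)$, keeping track of the rescaling it induces on the dual trace.
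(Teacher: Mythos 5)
Your argument is correct and is essentially the paper's own: the proposition is stated there as the direct spin analogue of Proposition~\ref{prop:bulk_Kasmod_spec_trip}, resting on exactly the ingredients you use --- the Laca--Neshveyev dual trace $\Tr_\tau$, the computation of $\Tr_\tau\big((1+|X|^2)^{-s/2}\big)$ giving $k$-summability, Schwartz decay for boundedness of $[D,a]$ and smoothness, and the chirality element $\Gamma_0$ in the even case. One small correction: the unbounded operators $X_j$ are not norm limits of finite-rank endomorphisms; affiliation of $D$ to $\calN\hat\otimes\End(\C^\nu)$ follows instead from its bounded functions, e.g.\ $(1+|X|^2)^{-1/2}\otimes 1_{\C^\nu}$, being norm-convergent sums of the rank-one operators $\Theta_{\delta_m\otimes 1_B,\delta_m\otimes 1_B}$ and hence lying in $\calN\hat\otimes\End(\C^\nu)$.
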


Therefore all hypotheses required to apply the semifinite local index formula are satisfied. 
Furthermore, the algebra $\calA$ is Fr\'{e}chet and stable under the holomorphic functional calculus. Therefore 
all pairings of $K_k(\calA)$ extend to pairings with $K_k(B\rtimes_\theta \Z^k)$.


\section{Complex pairings and the local index formula}
Let us now restrict to a complex algebra $A=B\rtimes_\theta\Z^k$, where $B$ is 
separable, unital and possesses a faithful, semifinite and 
norm lower semicontinuous trace $\tau_B$ 
that is invariant under the $\Z^k$-action. 
First, the semifinite index pairing is related to the `base algebra' $B$ 
and the dynamics of the $\Z^k$-action.
\begin{lemma} \label{lemma:semifinite_is_KK}
The semifinite index pairing of a class $[x]\in K_k(B\rtimes_\theta \Z^k)$ 
with the spin semifinite spectral triple from Proposition \ref{prop:spin_semifinite_trip}
can be computed by the $K$-theoretic composition
\begin{equation}\label{eq:refined_semifinite_pairing}
 K_k(B\rtimes_\theta \Z^k) \times KK^k(B\rtimes_\theta\Z^k,B) 
\;\to\; K_{0}(B) 
 \;  \xrightarrow{(\tau_B)_*} \;\R\; ,
\end{equation}
with the class in $KK^k(B\rtimes_\theta\Z^k,B)$ represented by 
$\lambda^{\!\mathfrak{S}}_k$ 
 from Proposition \ref{prop:spin_Dirac_module}. 
\end{lemma}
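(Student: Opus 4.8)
The plan is to follow the construction that turns the unbounded Kasparov module $\lambda_k^{\!\mathfrak{S}}$ of Proposition~\ref{prop:spin_Dirac_module} into the spin semifinite spectral triple of Proposition~\ref{prop:spin_semifinite_trip}, and to show that the passage from the $B$-valued index (the Kasparov product with $[\lambda_k^{\!\mathfrak{S}}]$) to the semifinite Fredholm index is implemented exactly by the trace $\tau_B$. Recall that the semifinite triple is obtained by the \cite{LN04} construction: one forms the internal tensor product of $\ell^2(\Z^k,B)\hat\otimes\C^\nu$ with the GNS module $L^2(B,\tau_B)$ over $B$, yielding $\ell^2(\Z^k)\otimes L^2(B,\tau_B)\hat\otimes\C^\nu$; the operator $\sum_j X_j\otimes 1\hat\otimes\Gamma^j$ and the left action of $A=B\rtimes_\theta\Z^k$ are the ones induced from the module (so no genuinely new data is introduced); and the trace $\Tr_\tau$ on $\calN=\End_B^{00}(\ell^2(\Z^k,B))''$ satisfies the compatibility $\Tr_\tau(\Theta_{\xi_1,\xi_2})=\tau_B[(\xi_2\mid\xi_1)_B]$ recorded there. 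Since $\sum_j X_j\hat\otimes\Gamma^j$ is not invertible, one passes to its double as in the discussion after the definition of the semifinite index pairing; this changes neither the semifinite pairing nor the $KK$-class and leaves the argument unaffected.

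First I would fix an explicit representative of the Kasparov product. For $[x]\in K_k(A)$ the product $[x]\hat\otimes_A[\lambda_k^{\!\mathfrak{S}}]$ lies in $K_0(B)$ -- the degrees add to $2k\equiv 0$ mod $2$ -- and, using the constructive Kasparov product together with unitality of $B$ and the explicit frame $\{\delta_m\otimes 1_B\}$ of $\ell^2(\Z^k,B)$, it is represented by the formal difference of the kernel and cokernel projections of an adjointable operator $F_x$ on a countably generated $B$-module, invertible modulo $\End_B^0$. In the even case ($k$ even) $F_x$ is the compression by $p$ of the bounded transform of the Dirac operator; in the odd case ($k$ odd) one uses the graded $\C\ell_1$-picture \eqref{eq-OddKKRep} and $F_x$ is the associated Toeplitz-type operator built from $u$ and the positive spectral projection $\Pi$ of the Dirac operator. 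Since $\lambda_k^{\!\mathfrak{S}}$ is a genuine Kasparov module, $\pi(a)(1+D^2)^{-1/2}\in\End_B^0$, so after the internal tensor product with $L^2(B,\tau_B)$ the operator $F_x\otimes 1$ is $\Tr_\tau$-Fredholm and its phase/compression coincides with the one appearing in the definition of the semifinite index pairing for the triple of Proposition~\ref{prop:spin_semifinite_trip}.

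The heart of the matter is then the identity
$$
  \Index_{\Tr_\tau\hat\otimes\Tr_{\C^\nu}}\!\big(F_x\otimes 1_{L^2(B,\tau_B)}\big)
  \;=\; (\tau_B)_*\!\big([x]\hat\otimes_A[\lambda_k^{\!\mathfrak{S}}]\big)\;.
$$
The left-hand side is the $(\Tr_\tau\hat\otimes\Tr_{\C^\nu})$-trace of the difference of the projections onto $\Ker$ and $\coker$ of $F_x\otimes 1$. Because taking kernels commutes with the internal tensor product and these projections lie in the compact endomorphisms $\End_B^0$ of the module (they are module-finite-rank, using that $\tau_B$ is faithful, semifinite and norm lower semicontinuous and $B$ is unital), the compatibility property of $\Tr_\tau$ converts this trace into $\tau_B$ applied to the $B$-valued operator trace of the difference of the module kernel/cokernel projections of $F_x$ -- which is by definition $(\tau_B)_*$ of the $K_0(B)$-class $[x]\hat\otimes_A[\lambda_k^{\!\mathfrak{S}}]$. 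This is the ``dual trace'' compatibility that also underlies \cite[Theorem~4.1]{KNR} and \cite[Proposition~2.13]{CGRS2}; here it is more transparent because we start from a Kasparov module rather than reconstructing one from the semifinite triple.

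I expect the main obstacle to be the first reduction: producing a representative of $[x]\hat\otimes_A[\lambda_k^{\!\mathfrak{S}}]$ for which the kernel and cokernel are honestly module-finite-rank, so that both sides of the displayed identity are literally the trace of the same projections after passing to $L^2(B,\tau_B)$. This needs a standard but slightly delicate compact-perturbation argument at the level of $B$-modules, together with careful even/odd Clifford bookkeeping so that the doubled (even) and $\C\ell_1$-graded (odd) pictures used to define the semifinite pairings are matched with the $\C\ell_k$-graded picture used to form the Kasparov product; the remaining verifications are routine given Propositions~\ref{prop:spin_Dirac_module} and~\ref{prop:spin_semifinite_trip}.
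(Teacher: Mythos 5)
Your proposal follows essentially the same route as the paper: represent the product $[x]\hat\otimes_A[\lambda_k^{\!\mathfrak{S}}]$ by the compressed Dirac operator $p(X\otimes 1_q)_\pm p$ (resp.\ the Toeplitz operator $\Pi u\Pi$ in the odd case), regularise so that kernel and cokernel are finitely generated projective $B$-submodules, and then use the compatibility $\Tr_\tau(\Theta_{\xi,\xi})=\tau_B\big((\xi\mid\xi)_B\big)$ together with a finite frame for the kernel to identify the semifinite trace of the kernel projections with $(\tau_B)_*$ of the resulting $K_0(B)$-class. The obstacle you flag (making the kernel/cokernel honestly finite-rank and matching the doubled and $\C\ell_1$-graded pictures) is exactly the point the paper handles by its ``after regularising if necessary'' step and its appeal to \cite{KNR} and \cite[Section 4.3.2]{PSBKK} in the odd case, so no essential idea is missing.
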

\begin{proof}
We start with the even pairing, with $p\in M_q(B\rtimes_\theta\Z^k)$ 
representing $[p]\in K_0(B\rtimes_\theta\Z^k)$.
Taking the double $X=X_M$ if necessary,
the semifinite index pairing is given by the semifinite index
$$
  \langle [p], [(\calA,\calH,X)] \rangle 
\,  =\, (\Tr_\tau\otimes\Tr_{\C^l})( P_{\Ker(p (X\otimes 1_q)_+ p)} ) - 
     (\Tr_\tau\otimes\Tr_{\C^l})( P_{\Ker(p (X\otimes 1_q)_+^* p)} )
\,,
$$
with $P_{\Ker(T)}$ the projection onto the kernel of $T$, 
$\Tr_{\C^l}$ the
finite trace from the spin structure
and the operator $X_+$ comes from the decomposition 
$X = \begin{pmatrix} 0 & \;X_- \\ X_+ & \;0 \end{pmatrix}$ due to the grading in even dimension.
Next we compute the Kasparov product in Equation \eqref{eq:refined_semifinite_pairing} 
following, for example, \cite[Section 4.3.1]{PSBKK}.
The product $[p]\hat\otimes_A [\lambda_k]\in KK(\C,B)$ is represented by the class of the 
Kasparov module
$$
   \left( \C,\, p\big(\ell^2(\Z^k,B)^{\oplus q}\big)\otimes \C^{2l},\, 
      \begin{pmatrix} 0 & p(X\otimes 1_q)_- p \\ p(X\otimes 1_q)_+ p & 0 \end{pmatrix} \right)\,, 
   \quad \gamma = \mathrm{Ad} \begin{pmatrix} 1 & 0 \\ 0 & -1 \end{pmatrix}\,.
$$
After regularising if necessary, 
$\Ker( p(X\otimes 1_q)_+ p)$ is a finitely generated and projective submodule of 
$p\big(\ell^2(\Z^k,B)^{\oplus q}\big)\otimes\C^l$ and the projection onto this submodule is 
compact (and therefore finite-rank). We can associate a $K$-theory class to this Kasparov module
by noting that 
$\End_B^0\big(p(\ell^2(\Z^k,B))^{\oplus q} \otimes\C^l\big) \cong B\otimes \calK$ 
and taking the difference 
$$
[P_{\Ker( p(X\otimes 1_q)_+ p)}] \;-\; [ P_{\Ker(p(X\otimes 1_q)_+^* p)}] \,\in\, K_0(B)
$$ 
Because $\Ker( p(X\otimes 1_q)_+ p)$ is finitely generated, the module 
$p(\ell^2(\Z^k,B)^{\oplus q}) \otimes\C^l$ has a
\emph{finite} frame $\{e_j\}_{j=1}^n$ such that 
$\sum_{j=1}^n \Theta_{e_j,e_j} = \mathrm{Id}_{\Ker( p(X\otimes 1_q)_+ p)}$. 
Taking the induced trace $(\tau_B)_\ast: K_0(B) \to \R$, one can use the properties 
of the dual trace $\Tr_\tau$ to note that
$$
   \tau_B\big(P_{\Ker( p(X\otimes 1_q)_+ p)}\big) 
\;=\; \sum_{j=1}^n \tau_B( (e_j\mid e_j)_B ) 
   \; =\;  \sum_{j=1}^n \Tr_\tau(  \Theta_{e_j,e_j} )\;.
$$
The right hand side is now a trace defined over 
$\End_B^{00}\big(p(\ell^2(\Z^k,B)^{\oplus q})\otimes\C^l)\subset \calN\hat\otimes \End(\C^l)$ and by construction it is 
the same as $(\Tr_\tau\otimes\Tr_{\C^l})( P_{\Ker(p (X\otimes 1_q)_+ p)})$.
An analogous result holds for $\Ker( p(X\otimes 1_q)_+^* p)$, so $(\tau_B)_\ast([p]\hat\otimes_{B\rtimes_\theta\Z^k} [\lambda_k] )$ 
is represented by
$$
 (\Tr_\tau\otimes\Tr_{\C^l})( P_{\Ker(p (X\otimes 1_q)_+ p)} ) \;- \;
    (\Tr_\tau\otimes\Tr_{\C^l})( P_{\Ker(p (X\otimes 1_q)_+^* p)} )
\;,
$$
and thus the pairings coincide.

\vspace{.1cm}

For the odd pairing, the same argument applies for $\Index_{\Tr_\tau}(\Pi u\Pi)$ with 
$\Pi$ the positive spectral projection of $X$ and $[u] \in K_1(B\rtimes_\theta\Z^k)$. For this, one 
has to appeal to the appendix of \cite{KNR} or \cite[Section 4.3.2]{PSBKK}. 
\end{proof}

\vspace{.1cm}

Lemma \ref{lemma:semifinite_is_KK} means that the semifinite pairing considered here has 
a concrete $K$-theoretic interpretation. 
In particular, we know that 
$\langle [x],[(\calA,\calH,X)] \rangle \subset \tau_B(K_0(B))$, which is 
countably generated for separable $B$.
This is one of the reasons we build a Kasparov module first and {then} 
construct a semifinite spectral triple via the dual trace $\Tr_\tau$. 

\begin{remark}
We may also pair $K$-theory classes with the Kasparov module $\lambda_k$ from 
Proposition \ref{prop:crossed_prod_kas_mod} by the composition
\begin{equation} \label{eq:oriented_complex_pairing}
  K_k(B\rtimes_\theta \Z^k) \times KK^k( B\rtimes_\theta \Z^k, B) 
   \;\to\; KK( \C\ell_{2k}, B) \;\xrightarrow{\;\cong\;}\; K_0(B) 
   \xrightarrow{(\tau_B)_\ast} \R
\end{equation}
where $KK( \C\ell_{2k}, B) \,{\cong}\, K_0(B)$ by stability 
and~\cite[{\S}6, Theorem 3]{Kasparov80}.
We can think of Equation \eqref{eq:oriented_complex_pairing} as the 
definition of the complex semifinite index pairing of $K$-theory with the
semifinite spectral triple from Proposition \ref{prop:bulk_Kasmod_spec_trip} 
over the graded algebra $B\rtimes_\theta\Z^k \hat\otimes \C\ell_k$. 
Indeed, in more general 
circumstances, the $K$-theoretic composition is how the semifinite pairing 
is defined, where in general 
one pairs with the class in $KK^k(A,C)$ with $C$ a subalgebra of 
$\calK_\calN$~\cite[Section 2.3]{CGRS2}.

\vspace{.1cm}

Equation \eqref{eq:oriented_complex_pairing} also has a natural 
analogue in the real case, namely
$$
 KO_k(B\rtimes \Z^k) \times KKO^k(B\rtimes \Z^k,B) \;\to\; 
  KKO(C\ell_{k,0}\hat\otimes C\ell_{0,k}, B) \;\xrightarrow{\cong}\; KO_0(B) 
   \xrightarrow{(\tau_B)_\ast} \R
$$
as $C\ell_{k,0}\hat\otimes C\ell_{0,k} \cong M_l(\R)$ which
is Morita equivalent to $\R$. 
Of course, we also want to pair our Kasparov module with elements in 
$KO_j(B\rtimes\Z^k)$ for $j\neq k$, and in this situation we use the 
general Kasparov product (see Section \ref{sec:real_pairings}).
\hfill $\diamond$
\end{remark}

To compute the local index formula, we first note some preliminary results.
\begin{lemma} \label{lem:residue_trace_formula_part1}
The function
$$
  \zeta(s) \;=\; \Tr_\tau\!\left(S^n b(1+|X|^2)^{-s/2}\right)\;, \qquad s>k\;,
$$
has a meromorphic extension to the complex plane with
$$  
\res_{s=k} \;\Tr_\tau\!\left(S^n b(1+|X|^2)^{-s/2}\right) 
\;=\; \delta_{n,0}\, {\mathrm{Vol}_{k-1}(S^{k-1})} 
   \tau_B(b)\;.
$$
\end{lemma}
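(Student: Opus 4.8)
The plan is to compute the trace explicitly using the diagonal structure of $(1+|X|^2)^{-s/2}$ with respect to the frame $\{\delta_m\otimes 1_B\}_{m\in\Z^k}$, reducing the problem to a classical Dirichlet-series (Epstein zeta) residue computation. First I would observe that $S^n b$ shifts the frame vector $\delta_m\otimes 1_B$ to (a scalar multiple of) $\delta_{m+n}\otimes 1_B$, so the operator $S^n b (1+|X|^2)^{-s/2}$ acts on frame vectors by $\delta_m\otimes 1_B \mapsto (1+|m|^2)^{-s/2}\, \delta_{m+n}\otimes(\text{something})$. Using the formula $\Tr_\tau(\Theta_{\xi_1,\xi_2}) = \tau_B[(\xi_2\mid\xi_1)_B]$ from the Laca--Neshveyev proposition, together with the fact that $(\delta_{m+n}\otimes 1_B\mid \delta_m\otimes 1_B)_B = \langle\delta_{m+n},\delta_m\rangle\, 1_B = \delta_{n,0}\, 1_B$, one sees that only the diagonal $n=0$ terms survive. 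Thus for $n\neq 0$ the function $\zeta(s)$ vanishes identically, giving the Kronecker delta $\delta_{n,0}$.

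For $n=0$, I would carefully track the action of $b$ on the frame. Since $b\cdot(\delta_m\otimes 1_B) = \delta_m\otimes\alpha_{-m}(b)$ and $\alpha$ is trace-preserving, we get $\tau_B[(\delta_m\otimes 1_B\mid \delta_m\otimes\alpha_{-m}(b))_B] = \tau_B(\alpha_{-m}(b)) = \tau_B(b)$, independent of $m$. Hence
\[
  \zeta(s) \;=\; \sum_{m\in\Z^k} (1+|m|^2)^{-s/2}\,\tau_B(b)\;,
\]
exactly as in the computation of $\Tr_\tau((1+|X|^2)^{-s/2})$ done in the excerpt but with $\tau_B(1_B)$ replaced by $\tau_B(b)$. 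The problem now reduces to the meromorphic continuation of the Epstein-type zeta function $\sum_{m\in\Z^k}(1+|m|^2)^{-s/2}$, which is classical: it converges for $\Re(s)>k$, continues meromorphically to $\C$, and its rightmost pole is a simple pole at $s=k$ with residue $\mathrm{Vol}_{k-1}(S^{k-1})$. This can be seen, e.g., by comparing the sum to the integral $\int_{\R^k}(1+|x|^2)^{-s/2}\,dx$ via an integral test (or Poisson summation / Mellin transform), passing to polar coordinates $\int_0^\infty (1+r^2)^{-s/2} r^{k-1}\,dr\cdot\mathrm{Vol}_{k-1}(S^{k-1})$, and noting that this beta-function integral has a simple pole at $s=k$ with residue $1$ (the $r\to\infty$ tail behaves like $\int^\infty r^{k-1-s}\,dr$). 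Multiplying by $\tau_B(b)$ gives the claimed residue.

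The bulk of the work is entirely routine — the only genuine point requiring care is the interchange of the sum over $m$ with the trace $\Tr_\tau$, which is justified because $(1+|X|^2)^{-s/2}$ is a norm-convergent sum of finite-rank operators for $\Re(s)>k$ (as already noted in the proof of Proposition~\ref{prop:spin_Dirac_module}) and $\Tr_\tau$ is norm lower semicontinuous, so one may use dominated convergence / normality of the trace. The ``main obstacle,'' such as it is, is merely being precise that the meromorphic continuation and the extraction of the residue commute with the bounded factor $\tau_B(b)$ and that no other pole interferes at $s=k$; but since the continued Epstein zeta function has only the single pole at $s=k$ on the line $\Re(s)\geq k$, this is immediate. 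I would therefore present the argument as: (i) reduce to the diagonal via the frame and the Kronecker delta, (ii) identify the resulting Dirichlet series, (iii) cite/recall the standard Epstein zeta residue, and (iv) multiply through by $\tau_B(b)$.
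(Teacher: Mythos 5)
Your proposal is correct and follows essentially the same route as the paper: expand $(1+|X|^2)^{-s/2}$ over the frame $\{\delta_m\otimes 1_B\}$, use $\Tr_\tau(\Theta_{\xi_1,\xi_2})=\tau_B[(\xi_2\mid\xi_1)_B]$ and the inner product of basis vectors to extract $\delta_{n,0}$, invoke $\alpha$-invariance of $\tau_B$ (with $\theta(0,m)=1$) to pull out $\tau_B(b)$, and then identify the residue of $\sum_{m\in\Z^k}(1+|m|^2)^{-s/2}$ at $s=k$ with $\mathrm{Vol}_{k-1}(S^{k-1})$, exactly as the paper does via the Beta-function/Gamma-function evaluation of the corresponding radial integral.
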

\begin{proof}
We use the frame $\{\delta_m \otimes 1_B\}_{m\in\Z^k}$ for $\ell^2(\Z^k,B)$ and note that
$S^n b\cdot( \delta_m \otimes 1_B) = \delta_{m+n}\otimes \alpha_{-m-n}(\theta(n,m))\alpha_{-m}(b)$.
Computing, for $s>k$,
\begin{align*}
  \Tr_\tau\!&\left(S^n b(1+|X|^2)^{-s/2}\right)  \;=\; \Tr_\tau \bigg( S^n b \sum_{m\in\Z^k} (1+|m|^2)^{-s/2}
    \Theta_{\delta_m\otimes 1, \delta_m \otimes 1} \bigg) \\
    &\;=\; \sum_{m\in\Z^k} (1+|m|^2)^{-s/2} \Tr_\tau\!\left( \Theta_{\delta_{m+n}\otimes \alpha_{-m-n}(\theta(n,m))\alpha_{-m}(b), \delta_m\otimes 1} \right) \\
    &\;=\; \sum_{m\in\Z^k} (1+|m|^2)^{-s/2} \tau_B\big( \langle \delta_m,\delta_{n+m}\rangle_{\ell^2(\Z^k)} 
        \alpha_{-m-n}(\theta(n,m))\alpha_{-m}(b) \big) \\
    &\;= \;\delta_{n,0} \sum_{m\in\Z^k} (1+|m|^2)^{-s/2} \tau_B\big( \theta(0,m)b \big) \\
    &\;= \;\delta_{n,0}\, \tau_B(b) \sum_{m\in\Z^k} (1+|m|^2)^{-s/2} \\
    &\;=\; \delta_{n,0}\, \tau_B(b)\, \mathrm{Vol}_{k-1}(S^{k-1}) \, \frac{\Gamma\!\left(\frac{k}{2}\right) \Gamma\!\left(\frac{s-k}{2}\right)}{2\Gamma\!\left(\frac{k}{2}\right)}\;,
\end{align*}
where  the invariance of the $\alpha$-action in the trace was used.
By the functional equation for the 
$\Gamma$-function, $\zeta(s)$ has a meromorphic extension to the complex plane 
and is holomorphic for $\Re(s)>k$. Computing the residue obtains the result.
 \end{proof}

\vspace{.1cm}

Next let us note that any trace on $B$ can be extended to $\calA$ by defining
$$
  \calT\!\left(\sum_n S^n b_n\right) \;=\; \tau_B(b_0)\;,
$$
where $\calT$ is faithful and norm lower semicontinuous if $\tau_B$ is faithful and 
norm lower semicontinuous. A direct extension of Lemma \ref{lem:residue_trace_formula_part1} 
then gives that
\begin{equation}\label{eq:residue_trace_formula}
  \res_{s=k}\; \Tr_\tau\!\left(a(1+|X|^2)^{-s/2}\right)\; =\;  {\mathrm{Vol}_{k-1}(S^{k-1})} 
   \calT(a)\;, \qquad a\in\calA\;.
\end{equation}

\subsection{Odd formula}

We will compute the semifinite pairing with the spectral triple constructed 
from Proposition \ref{prop:spin_semifinite_trip}, which 
aligns our results with~\cite{PSBKK}. The equivalence between 
spin and oriented semifinite spectral triples means that we also obtain 
formulas for the pairing with the semifinite spectral triple from 
Proposition \ref{prop:bulk_Kasmod_spec_trip}, where the 
result would be the same up to a normalisation.

\vspace{.1cm}

Except for certain cases where specific results on the spinor trace 
of the gamma matrices are needed, we will write the trace $\Tr_\tau \hat\otimes \Tr_{\C^\nu}$ 
on the von Neumann algebra $\calN \hat\otimes \End(\C^\nu)$ as just $\Tr_\tau$.

\begin{theorem}[Odd index formula] \label{thm:higher_dim_chern_number_odd}
Let $u$ be a complex unitary in $M_q(\calA)$ and $X_\mathrm{odd}$ the complex semifinite
spectral triple from Proposition \ref{prop:spin_semifinite_trip} with $k$ odd. 
Then the semifinite index pairing is given by the formula
$$  \langle [u], [X_\mathrm{odd}] \rangle\; =\; C_k \sum_{\sigma\in S_k}(-1)^\sigma\, 
 (\Tr_{\C^q}\otimes\calT) \bigg(\prod_{i=1}^k u^* \partial_{\sigma(i)}u \bigg)\;, $$
where $C_{2n+1} = \frac{ -2(2\pi)^n n!}{i^{n+1}(2n+1)!}$, 
$\Tr_{\C^q}$ is the matrix trace on $\C^q$, $S_k$ is the permutation group on $\{1,\ldots,k\}$  
and $\partial_ja = -i[X_j,a]$ for any $a\in\calA$ and $j\in\{1,\ldots,k\}$.
\end{theorem}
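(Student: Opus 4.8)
The plan is to apply the semifinite local index formula of \cite{CPRS2,CPRS3} to the odd, $QC^\infty$ and $k$-summable spectral triple $X_{\mathrm{odd}}$ of Proposition \ref{prop:spin_semifinite_trip}. Since $k$ is odd, there is a single component in the odd resolvent (residue) cocycle, and because the spectral dimension is exactly $k$ (an integer), all but the top term $\phi_k$ in the cocycle contribute nothing to the pairing: the lower-order terms involve strictly higher powers of the resolvent and have no pole at the relevant point, while the analytic continuation issues are controlled by the $QC^\infty$ hypothesis. Thus the first step is to reduce $\langle[u],[X_{\mathrm{odd}}]\rangle$ to the single residue
\[
  \langle[u],[X_{\mathrm{odd}}]\rangle
  \;=\;
  c_k\,\res_{s=0}\;(\Tr_{\C^q}\otimes\Tr_\tau)\Big( u^*[X,u]\,(1+X^2)^{-k/2-s}\cdots\Big),
\]
with the combinatorial constant $c_k$ coming directly from the normalisation in the local index formula for odd $p=k$.

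Next I would expand the commutators. Writing $\partial_j a=-i[X_j,a]$ and using $X=\sum_j X_j\otimes\Gamma^j$ with the self-adjoint gamma matrices satisfying $\Gamma^i\Gamma^j+\Gamma^j\Gamma^i=2\delta_{ij}$, one has $[X,u]=\sum_j (-i)^{-1}(\partial_j u)\otimes\Gamma^j$ (and similarly when commutators are iterated, each factor produces one position derivative tensored with one gamma matrix). Substituting into the top residue term and carrying the $\Gamma$'s to the right, the fermionic (spinor) trace $\Tr_{\C^\nu}(\Gamma^{j_1}\cdots\Gamma^{j_k})$ appears. For $k=2n+1$ odd this trace is (up to a fixed scalar depending only on $n$) the totally antisymmetric Levi-Civita symbol $\varepsilon_{j_1\cdots j_k}$ times $i^{?}$; this is the only place where the explicit spinor representation of $\C\ell_k$ enters, and it is where the constant picks up its $i$-dependence. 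After performing the spinor trace, the sum over $j_1,\dots,j_k$ collapses to a sum over permutations $\sigma\in S_k$ with sign $(-1)^\sigma$, leaving the integrand $\prod_{i=1}^k u^*\partial_{\sigma(i)}u$ times the scalar operator $(1+|X|^2)^{-k/2-s}$ (plus, after integrating the simplex of resolvent parameters in the cocycle, the usual Beta-function factor).

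The remaining step is the scalar residue. Because $u\in M_q(\calA)$ and the $\partial_j$ preserve $\calA$, the operator $\prod_i u^*\partial_{\sigma(i)}u$ lies in $M_q(\calA)$, so its product with $(1+|X|^2)^{-k/2-s}$ has a $\Tr_\tau$-residue at $s=0$ governed exactly by Equation \eqref{eq:residue_trace_formula}: the residue equals $\mathrm{Vol}_{k-1}(S^{k-1})$ times $\calT$ of that element, i.e.\ $(\Tr_{\C^q}\otimes\calT)(\prod_i u^*\partial_{\sigma(i)}u)$. Collecting the numerical factors — the $1/2$ and the $\mathrm{Vol}_{k-1}(S^{k-1})=2\pi^{k/2}/\Gamma(k/2)$ from the residue, the Beta-function normalisation from the simplex integral in the resolvent cocycle, the spinor-trace scalar, and the overall constant in the local index formula — and simplifying for $k=2n+1$ using $\Gamma((2n+1)/2)$ identities yields the stated $C_{2n+1}=\dfrac{-2(2\pi)^n n!}{i^{n+1}(2n+1)!}$.

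I expect the main obstacle to be the bookkeeping of constants: matching the normalisation conventions of the semifinite local index formula (including the $\sqrt{2\pi i}$-type factors and the simplex/Beta-function contributions in the odd resolvent cocycle) with the explicit spinor trace $\Tr_{\C^\nu}(\Gamma^{j_1}\cdots\Gamma^{j_k})$ and the volume factor $\mathrm{Vol}_{k-1}(S^{k-1})$, so that everything telescopes to exactly $C_{2n+1}$ with the correct sign and power of $i$. The conceptual steps — vanishing of lower cocycle terms by integrality of the dimension, reduction to a single residue, commutator expansion, spinor trace producing $\varepsilon$, and the scalar residue \eqref{eq:residue_trace_formula} — are routine given the machinery already set up; the delicate part is purely the constant.
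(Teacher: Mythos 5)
Your overall route is the same as the paper's: apply the odd semifinite local index formula, reduce to the top term $\phi_k^r$, expand $[D,u]$ in gamma matrices, take the spinor trace to produce the antisymmetrisation over $S_k$, and evaluate the scalar residue via Equation \eqref{eq:residue_trace_formula}. However, the justification you give for the key reduction step is not correct. You claim that all terms $\phi_m^r(\mathrm{Ch}^m(u))$ with $m<k$ drop out ``because the spectral dimension is exactly $k$'' and because ``the lower-order terms involve strictly higher powers of the resolvent and have no pole at the relevant point.'' There is no such general principle: for a spectral triple of integer spectral dimension $k$ the lower cochains of the resolvent cocycle do in general contribute residues (this is exactly what happens for Dirac operators on curved manifolds, where lower Chern character terms pick up $\hat{A}$-type contributions), and $QC^\infty$ regularity does not make them vanish. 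The correct mechanism here, which is the content of Lemma \ref{prop:only_need_top_term_in_local_index_odd_case} in the paper (following \cite{BCPRSW}), is representation-theoretic: each integrand $a_0R_s(\lambda)[D,a_1]\cdots[D,a_m]R_s(\lambda)$ lies in the span of products of exactly $m$ gamma matrices (the resolvents being scalar in the spinor factor), and the spinor trace of any product of $j$ Clifford generators with $0<j<k$ vanishes; hence $\phi_m^r(\mathrm{Ch}^m(u))$ vanishes identically for $\Re(r)$ large, extends holomorphically to $r=(1-k)/2$, and contributes no residue. You in fact have this ingredient in hand — you use the spinor trace for the $m=k$ term — but you must apply it to the lower terms; as written, that step of your argument would fail.

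A secondary point: since the theorem is precisely the identity with the constant $C_{2n+1}=\frac{-2(2\pi)^n n!}{i^{n+1}(2n+1)!}$, the bookkeeping you defer is the bulk of the remaining work. In the paper it requires pseudodifferential-style manipulation (moving the resolvents to the right modulo terms holomorphic at $r=(1-k)/2$ before performing the Cauchy integral), extraction of the coefficient $\tilde{\sigma}_{n,0}=\Gamma(k/2)/\sqrt{\pi}$ from the resolvent-cocycle normalisation, the explicit spinor trace $\Tr_{\C^\nu}(i^k\Gamma^1\cdots\Gamma^k)=(-i)^{\lfloor(k+1)/2\rfloor}2^{\lfloor(k-1)/2\rfloor}$, the identity $[D,u^*]=-u^*[D,u]u^*$ to rewrite $\mathrm{Ch}^k(u)$ as $(-1)^n(u^*[D,u])^k$, and the volume identity $\mathrm{Vol}_{k-1}(S^{k-1})=k\pi^{k/2}/\Gamma(k/2+1)$. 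Your sketch names these ingredients, so the plan is sound once the reduction step is repaired, but the constant cannot be regarded as established until this computation is actually carried out.
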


Let us focus on the case $q=1$ and then extend to matrices by taking 
$(D\otimes 1_q)$ with $D=\sum_{j=1}^k X_j\otimes\Gamma^j$.
Because the semifinite spectral triple of Proposition 
 \ref{prop:spin_semifinite_trip} is smooth and with spectral dimension $k$, 
 the odd local index formula from \cite{CPRS2} gives 
$$ \langle [u],[X_\mathrm{odd}]\rangle 
\;= \;\frac{-1}{\sqrt{2\pi i}}\; \res_{r=(1-k)/2} 
\;\sum_{m=1,\text{odd}}^{2N-1} \! \phi_m^r(\mathrm{Ch}^m(u))\;, $$
where $u$ is a unitary in $\calA$, $N=\lfloor k/2\rfloor + 1$ and 
$$  \mathrm{Ch}^{2n+1}(u) 
\;= \;
(-1)^n n!\,u^*\otimes u \otimes u^*\otimes \cdots \otimes u\;, \hspace{0.5cm}(2n+2\text{ entries})
\;. 
$$
The functional $\phi_m^r$ is the resolvent cocycle from~\cite{CPRS2}. 
To compute the index pairing we recall the following important observation.

\begin{lemma}[\cite{BCPRSW}, Section 11.1] \label{prop:only_need_top_term_in_local_index_odd_case}
The only term in the sum $\sum\limits_{m=1,\mathrm{odd}}^{2N-1}\! \phi_m^r(\mathrm{Ch}^m(u))$ that 
contributes to the index pairing is the term with $m=k$.
\end{lemma}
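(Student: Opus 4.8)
The plan is to isolate which resolvent-cocycle component $\phi_m^r(\mathrm{Ch}^m(u))$ survives in the residue at $r=(1-k)/2$, and to argue that for $m\neq k$ the corresponding term either vanishes identically or is holomorphic at the relevant point, hence contributes nothing to the residue. First I would recall the explicit form of the resolvent cocycle from \cite{CPRS2}: up to combinatorial constants, $\phi_m^r(a_0,\dots,a_m)$ is a contour integral in $\lambda$ of $\Tr_\tau$ of a product $a_0\,R_\lambda\,[D,a_1]\,R_\lambda\cdots[D,a_m]\,R_\lambda\,(1+D^2)^{-r}$-type expressions, where $R_\lambda=(\lambda-(1+D^2))^{-1}$ and $D=\sum_j X_j\otimes\Gamma^j$. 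The key structural input is that each commutator $[D,a]$ with $a=\sum_n S^n b_n\in\calA$ equals $\sum_j \partial_j a\otimes\Gamma^j$, so a product of $m$ such commutators produces a sum of terms each carrying a monomial $\Gamma^{j_1}\cdots\Gamma^{j_m}$ in the Clifford/spinor factor, together with operators $\partial_{j_1}a_1\cdots\partial_{j_m}a_m$ on $\ell^2(\Z^k)\otimes L^2(B,\tau_B)$.

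The main step is then a parity/trace argument on the spinor factor $\C^\nu$ combined with a summability count on the base factor. For $k$ odd the spinor trace $\Tr_{\C^\nu}(\Gamma^{j_1}\cdots\Gamma^{j_m})$ vanishes unless $m$ is odd and, more importantly, unless the index set exhausts $\{1,\dots,k\}$; when $m<k$ one can reorganise the sum so that at least one $\Gamma^j$ appears to an odd power inside a trace whose remaining factors commute with it, forcing $\Tr_{\C^\nu}=0$ (this is the standard ``only the top Clifford monomial has nonzero trace'' fact for an odd number of distinct gamma matrices). For the terms with $m>k$, the argument is instead analytic: each such term, after performing the spinor trace and the $\lambda$-integral, reduces to $\Tr_\tau$ of an operator of the form $a'\,(1+|X|^2)^{-(r+\text{something})}$ where the extra power of $(1+|X|^2)^{-1}$ coming from the surplus resolvents makes the function holomorphic at $r=(1-k)/2$, by the meromorphic continuation established in Lemma \ref{lem:residue_trace_formula_part1} and \eqref{eq:residue_trace_formula} (the pole sits at $\Re(s)=k$, i.e.\ the $m=k$ normalisation, and higher $m$ pushes one strictly into the region of holomorphy). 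Hence $\res_{r=(1-k)/2}\phi_m^r(\mathrm{Ch}^m(u))=0$ for $m>k$ as well, and only $m=k$ remains.

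I would then assemble these observations: in the finite sum $\sum_{m=1,\text{odd}}^{2N-1}\phi_m^r(\mathrm{Ch}^m(u))$ with $N=\lfloor k/2\rfloor+1$, the terms with $m<k$ vanish by the spinor-trace parity argument, the (at most one) term with $m>k$ — namely $m=2N-1$ when $k$ is odd, so in fact $m=k$ is already the top term and there is nothing above it, but I keep the analytic argument for robustness — has no pole at $r=(1-k)/2$, so only $\phi_k^r(\mathrm{Ch}^k(u))$ contributes to the residue. This is precisely the statement of the lemma, and it is the computation that feeds into the proof of Theorem \ref{thm:higher_dim_chern_number_odd}.

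The hard part will be making the ``reorganise the sum so a lone $\Gamma^j$ sits inside the trace'' step fully rigorous while the resolvents $R_\lambda$ are interleaved with the commutators: the $R_\lambda$ factors do \emph{not} commute with the $\Gamma^j$ in general (since $D$ contains the $\Gamma$'s), so one cannot naively pull the spinor trace through. The clean way around this is to use that $R_\lambda=(\lambda-1-|X|^2\otimes 1)^{-1}$ is actually a scalar in the spinor variable — because $D^2=\sum_j X_j^2\otimes 1=|X|^2\otimes 1$ by the Clifford relations — so every resolvent \emph{is} central in $\End(\C^\nu)$ and the spinor trace genuinely factors out as $\Tr_{\C^\nu}(\Gamma^{j_1}\cdots\Gamma^{j_m})$ times a base-algebra expression. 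Once this factorisation is justified, the parity vanishing for $m<k$ and the holomorphy count for $m>k$ are both routine, and the proof closes. Alternatively, one may simply cite \cite[Section 11.1]{BCPRSW} directly, since the algebraic situation there (a flat Dirac operator built from commuting position operators tensored with gamma matrices) is identical to ours.
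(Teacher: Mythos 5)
Your proposal is correct and follows essentially the same route as the paper: both arguments rest on the facts that $D^2=|X|^2\otimes 1_{\C^\nu}$, so the resolvents are scalar in the spinor factor and the spinor trace factors out as $\Tr_{\C^\nu}(\Gamma^{j_1}\cdots\Gamma^{j_m})$, which vanishes for $0<m<k$, and that the resulting identical vanishing for $\Re(r)$ large forces the holomorphic extension of $\phi_m^r(\mathrm{Ch}^m(u))$ near $r=(1-k)/2$. Your extra discussion of $m>k$ is harmless but unnecessary, since for $k$ odd the sum already terminates at $m=2N-1=k$.
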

\begin{proof}
We first note that the spinor trace on the Clifford generators is given by
\begin{equation} \label{eq:trace_of_grading}
 \Tr_{\C^\nu} (i^k \Gamma^1\cdots \Gamma^k) 
\;=\; 
(-i)^{\lfloor (k+1)/2 \rfloor} 2^{\lfloor (k-1)/2\rfloor}
\;,
\end{equation} 
and will vanish on any product of $j$ Clifford generators with $0<j<k$. 
The resolvent cocycle involves the spinor trace of terms
$$  
a_0R_s(\lambda)[D,a_1]R_s(\lambda)\cdots [D,a_m]R_s(\lambda)\,, \qquad R_s(\lambda) 
\;=\; 
(\lambda-(1+s^2+D^2))^{-1}
\;, 
$$
for $a_0,\ldots,a_m\in\calA$. Noting that $[D,a_l] = i\sum_{j=1}^k \partial_j a_l\otimes \Gamma^j$ 
and $R_s(\lambda)$ is diagonal 
in the spinor representation, it follows that the product  $a_0R_s(\lambda)[D,a_1]\cdots [D,a_m]R_s(\lambda)$ 
will be in the span of $m$ Clifford generators acting on 
$\ell^2(\Z^k)\otimes L^2(B,\tau_B)\hat\otimes\C^\nu$.  
Furthermore, the trace estimates ensure that each spinor component of $\phi_m^r$
$$  
\int_\ell \lambda^{-k/2-r} a_0(\lambda-(1+s^2+|X|^2))^{-1}\partial_{j_1}a_1 \cdots \partial_{j_m}a_m(\lambda-(1+s^2+|X|^2))^{-1}\,\mathrm{d}\lambda 
$$
is trace-class for $a_0,\ldots,a_m\in\calA$ and real part $\Re(r)$ sufficiently large. 
Hence for $0<m<k$, the spinor trace will vanish for $\Re(r)$ large and 
$\phi_m^r(\mathrm{Ch}^m(u))$ 
analytically extends as a function holomorphic in a 
neighbourhood of $r=(1-k)/2$ for $0<m<k$. Thus $\phi_m^r(\mathrm{Ch}^m(u))$ does 
not contribute to the index pairing for $0<m<k$.
 \end{proof}

\begin{proof}[Proof of Theorem \ref{thm:higher_dim_chern_number_odd}]
Lemma \ref{prop:only_need_top_term_in_local_index_odd_case} simplifies the semifinite index substantially, namely it is given by the expression
\begin{align*}
  \langle [u],[X_\mathrm{odd}]\rangle &\;=\;  \frac{-1}{\sqrt{2\pi i}} \;\res_{r=(1-k)/2} \;\phi_k^r(\mathrm{Ch}^k(u))
\;.
\end{align*}
Therefore one needs to compute the residue at $r=(k-1)/2$ of
$$  
 \mathcal{C}_k \int_0^\infty\! s^k\,\Tr_\tau \!\left( \int_\ell \lambda^{-k/2-r} u^* R_s(\lambda) [D,u] R_s(\lambda)[D,u^*]\cdots [D,u]R_s(\lambda) \,\mathrm{d}\lambda\right)\!\mathrm{d}s\;,  
$$
where $k=2n+1$ and the constant
$$
\mathcal{C}_k \;=\;   -\,\frac{(-1)^{n+1}n!}{(2\pi i)^{3/2}} \;\frac{\sqrt{2i}\; 2^{d+1} \Gamma(d/2+1)}{\Gamma(d+1)}\;
$$ 
comes from the definition of the resolvent cocycle, 
see~\cite[Section 3.2]{CGRS2}, and $\mathrm{Ch}^k(u)$.
To compute this residue we move all terms $R_s(\lambda)$ to the right, 
which can be done up to a function holomorphic at $r=(1-k)/2$. 
This allows us to take the Cauchy integral. We then observe that 
$\underbrace{[D,u][D,u^*]\cdots[D,u]}_{k\text{ terms}} \in \calA \otimes 1_{\C^\nu}$, 
so Lemma \ref{lem:residue_trace_formula_part1} implies that 
the zeta function
$$ \Tr_\tau\!\left(u^*{[D,u][D,u^*]\cdots[D,u]}(1+D^2)^{-z/2}\right) $$
has at worst a simple pole at $\Re(z)=k$. Therefore we can explicitly compute
\begin{align*}
  &\frac{-1}{\sqrt{2\pi i}} \;\res_{r=(1-k)/2} \;\phi_k^r(\mathrm{Ch}^k(u))  \\
   &\hspace{2cm}=\;  (-1)^{n+1} \,n! \,\frac{1}{k!}\,\tilde{\sigma}_{n, 0}\,\res_{z=k}
     \Tr_\tau\!\left(u^*[D,u][D,u^*]\cdots[D,u](1+D^2)^{-z/2}\right)
   \;, 
\end{align*}
where the numbers $\tilde{\sigma}_{n, j}$ are defined by the formula
$$  \prod_{j=0}^{n-1} (z+j+1/2) \;=\; \sum_{j=0}^n z^j \tilde{\sigma}_{n, j}\;. $$
Hence the number $\tilde{\sigma}_{n,0}$ is the coefficient of $1$ in the 
product $\prod_{l=0}^{n-1}(z+l+1/2)$. This is the product of all the non-$z$ 
terms, which can be written as
$$  (1/2)(3/2)\cdots (n-1/2) \;= \;\frac{1}{\sqrt{\pi}}\;\Gamma(k/2)\;. $$
Putting this back together, our index pairing can be written as
\begin{align*}
   \langle [u],[X_\mathrm{odd}]\rangle  &\;= \; (-1)^{n+1}\, \frac{n! \Gamma(k/2)}{k!\sqrt{\pi}}\; \res_{z=k}\; 
   \Tr_\tau\!\left(u^*[D,u][D,u^*]\cdots[D,u](1+D^2)^{-z/2}\right).
\end{align*}
We make use of the identity $[D,u^*] = -u^*[D,u]u^*$, which allows us to rewrite
\begin{align*}
   u^*\underbrace{[D,u][D,u^*]\cdots[D,u]}_{k=2n+1\text{ terms}} &
   \;=\; (-1)^n u^*[D,u]u^*[D,u]u^*\cdots u^*[D,u] \\
     &\;=\; (-1)^n \left( u^*[D,u]\right)^k\;.
\end{align*}
Recall that $[D,u] = \sum_{j=1}^k [X_j,u]\hat\otimes\Gamma^j = i\sum_{j=1}^k \partial_j(u)\hat\otimes\Gamma^j$, 
so applying this relation we have that
$u^*[D,u] = i\sum_{j=1}^k u^* \partial_j(u)\hat\otimes\Gamma^j$. Taking the $k$-th 
power
$$  
\left(u^*[D,u]\right)^k \;=\; i^k \sum_{J=(j_1,\ldots,j_k)} u^*(\partial_{j_1}u)\cdots  u^*(\partial_{j_k}u) \hat\otimes \Gamma^{j_1}\cdots \Gamma^{j_k} 
$$
where the sum is extended over all multi-indices $J$. Note that every term in 
the sum is a multiple of the identity of $\C^\nu$ and so has a non-zero spinor trace. 
Writing this product in terms of permutations,
$$  
  (-1)^n\left(u^*[D,u]\right)^k \;=\; (-1)^n i^k \sum_{\sigma\in S_k}(-1)^\sigma 
   \prod_{j=1}^k u^*(\partial_{\sigma(j)}u)\hat\otimes \Gamma^{j} \;,
$$
with $S_k$ is the permutation group of $k$ letters.
Let's put all this back together. 
\begin{align*}
  \langle [u], [X_\mathrm{odd}] \rangle &\;=\; (-1)^{n+1} \frac{n! \Gamma(k/2)}{k!\sqrt{\pi}}\res_{z=k} \Tr_\tau\!\left(u^*[D,u][D,u^*]\cdots[D,u](1+D^2)^{-z/2}\right) \\
    &\hspace{-1.1cm}= \;-\, \frac{n! \Gamma(k/2)}{k!\sqrt{\pi}}\;\res_{z=k}\; \Tr_\tau \bigg[i^k \bigg(\sum_{\sigma\in S_k}(-1)^\sigma \prod_{j=1}^k u^*(\partial_{\sigma(j)}u)\hat\otimes \Gamma^{j}\bigg) \!(1+D^2)^{-z/2} \bigg] \\
    &\hspace{-1.1cm}= \;-\,\frac{n! \Gamma(k/2) 2^{\lfloor (k-1)/2\rfloor}}{i^{\lfloor (k+1)/2 \rfloor}\;k!\,\sqrt{\pi} }\;\res_{z=k} \;\Tr_{\tau} \bigg(\sum_{\sigma\in S_k}(-1)^\sigma \prod_{j=1}^k \!u^*(\partial_{\sigma(j)}u) (1+|X|^2)^{-z/2} \bigg)\;,
\end{align*}
where we have used Equation \eqref{eq:trace_of_grading} and that 
$(1+D^2) = (1+|X|^2)\otimes 1_{\C^\nu}$. We can apply 
Equation \eqref{eq:residue_trace_formula} to reduce the formula to
\begin{align*}
   \langle [u], [X_\mathrm{odd}] \rangle  &
   \;=\; -\frac{n! \Gamma(k/2)\mathrm{Vol}_{k-1}(S^{k-1}) 2^{\lfloor (k-1)/2\rfloor}}
   {i^{\lfloor (k+1)/2 \rfloor} k!\sqrt{\pi}} \sum_{\sigma\in S_k}(-1)^\sigma \, 
   \calT \bigg( \prod_{i=1}^k u^* (\partial_{\sigma(i)}u) \bigg)\;.
\end{align*}
Now the identity $\mathrm{Vol}_{k-1}(S^{k-1}) = \frac{k\pi^{k/2}}{\Gamma(k/2+1)}$ allows to simplify
$$  
\frac{n! \Gamma(k/2)\mathrm{Vol}_{k-1}(S^{k-1}) 2^{\lfloor (k-1)/2\rfloor}}{i^{\lfloor (k+1)/2 \rfloor} k!\sqrt{\pi}} 
\;=\; \frac{ 2(2\pi)^n n!}{i^{n+1}(2n+1)!}\;,  
$$
for $k=2n+1$, and therefore
$$  
\langle [u], [X_\mathrm{odd}] \rangle \;= \;C_k \sum_{\sigma\in S_k}(-1)^\sigma\,  
  \calT \bigg( \prod_{i=1}^k u^* (\partial_{\sigma(i)}u) \bigg)\;, 
  \qquad C_{2n+1}\; =\; \frac{ -2(2\pi)^n n!}{i^{n+1}(2n+1)!}\;,
$$
which concludes the argument.
\end{proof}

\subsection{Even formula}

\begin{theorem}[Even index formula] \label{thm:higher_dim_chern_number_even}
Let $p$ be a complex projection in $M_q(\calA)$ and $X_\mathrm{even}$ the complex semifinite
spectral triple from Proposition \ref{prop:spin_semifinite_trip} with $k$ even. 
Then the semifinite index pairing can be expressed by the formula
\begin{equation*} 
  \langle [p], [X_\mathrm{even}]\rangle \;=\;  C_k\, \sum_{\sigma\in S_k} 
  (-1)^\sigma\, (\Tr_{\C^q}\otimes\calT) \bigg(p \prod_{i=1}^k \partial_{\sigma(i)}p \bigg)\;, 
\end{equation*}
where $C_k=\frac{(2\pi i)^{k/2}}{(k/2)!}$ and $S_k$ is the permuation group of $\{1,\ldots,d\}$.
\end{theorem}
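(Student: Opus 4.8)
The plan is to mirror the structure of the proof of Theorem \ref{thm:higher_dim_chern_number_odd}, now using the \emph{even} semifinite local index formula from \cite{CPRS2,CPRS3} applied to the even, $k$-summable, $QC^\infty$ semifinite spectral triple $X_\mathrm{even}$ of Proposition \ref{prop:spin_semifinite_trip}. First I would reduce to $q=1$ exactly as before, replacing $D=\sum_{j=1}^k X_j\hat\otimes\Gamma^j$ by $D\otimes 1_q$ and noting that the final formula then picks up the matrix trace $\Tr_{\C^q}$ by naturality. Writing $N=k/2+1$, the even resolvent cocycle expresses $\langle[p],[X_\mathrm{even}]\rangle$ (up to the usual normalisation $\frac{1}{\sqrt{2\pi i}}$ and a residue at $r=(1-k)/2$) as a sum $\sum_{m=0,\,\mathrm{even}}^{2N}\phi_m^r(\mathrm{Ch}^m(p))$, where $\mathrm{Ch}^{2n}(p)$ is the standard even Chern character built from $(p-\tfrac12)\otimes p\otimes\cdots\otimes p$ with the appropriate combinatorial constant.

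The key simplification is the analogue of Lemma \ref{prop:only_need_top_term_in_local_index_odd_case}: only the top term $m=k$ contributes. The argument is identical in spirit — $[D,p]=i\sum_{j=1}^k\partial_j p\,\hat\otimes\Gamma^j$ lies in the span of single Clifford generators, the resolvents $R_s(\lambda)$ are diagonal in the spinor factor, so a product with $m$ commutators sits in the span of products of $m$ Clifford generators; the even spinor trace now pairs against the grading operator $\Gamma_0=(-i)^{k/2}\Gamma^1\cdots\Gamma^k$, which by the Clifford relations is a nonzero multiple of the identity on $\C^\nu$ only when exactly $k$ distinct generators appear, and vanishes on products of fewer generators. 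The trace estimates then force $\phi_m^r(\mathrm{Ch}^m(p))$ to extend holomorphically across $r=(1-k)/2$ for $0\le m<k$, so only $\phi_k^r(\mathrm{Ch}^k(p))$ survives. One subtlety specific to the even case is bookkeeping the grading: the relevant spinor trace is $\Tr_{\C^\nu}(\Gamma_0\,\Gamma^{j_1}\cdots\Gamma^{j_k})$, which is $\pm$ a power of $2$ times the sign of the permutation $(j_1,\ldots,j_k)$, and I would record this constant carefully since it feeds directly into $C_k$.

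Having isolated the top term, I would move all $R_s(\lambda)$ factors to the right modulo a function holomorphic at $r=(1-k)/2$, perform the Cauchy $\lambda$-integral, and reduce to a residue at $z=k$ of a single zeta function $\Tr_\tau\big(\,(p-\tfrac12)[D,p]\cdots[D,p]\,(1+D^2)^{-z/2}\big)$ with $k$ commutator factors. Since $(p-\tfrac12)[D,p]^k\in\calA\otimes 1_{\C^\nu}$ (the commutators contribute the Clifford word $\Gamma^{j_1}\cdots\Gamma^{j_k}$), Lemma \ref{lem:residue_trace_formula_part1} and its extension \eqref{eq:residue_trace_formula} give a simple pole with residue $\mathrm{Vol}_{k-1}(S^{k-1})\,\calT(\,\cdot\,)$. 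Expanding $[D,p]^k=i^k\sum_{\sigma\in S_k}(-1)^\sigma\prod_{i}\partial_{\sigma(i)}p\,\hat\otimes\Gamma^{\sigma(1)}\cdots\Gamma^{\sigma(k)}$, applying the spinor trace against $\Gamma_0$, and using the $p$ versus $p-\tfrac12$ identity $p(\partial p)^k$-type rewriting (the $\tfrac12$ drops out since each $\partial_j p$ annihilates the scalar part after the trace, or rather because $(p-\tfrac12)$ and $p$ differ by a term whose contribution vanishes by the standard $\partial p\cdot\partial p$ cancellation), the expression collapses to $\sum_\sigma(-1)^\sigma\,\calT\big(p\prod_i\partial_{\sigma(i)}p\big)$ times an explicit product of Gamma-function factors, powers of $2$, powers of $i$, and $\mathrm{Vol}_{k-1}(S^{k-1})$. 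The main obstacle is purely the constant: assembling the resolvent-cocycle normalisation, the even Chern-character coefficient $(-1)^n\tfrac{(2n)!}{n!}$ type factor (for $k=2n$), the spinor-trace constant, and $\mathrm{Vol}_{k-1}(S^{k-1})=\tfrac{k\pi^{k/2}}{\Gamma(k/2+1)}$, and checking that everything telescopes to the clean value $C_k=\frac{(2\pi i)^{k/2}}{(k/2)!}$; this is the even counterpart of the $C_{2n+1}$ computation at the end of Theorem \ref{thm:higher_dim_chern_number_odd} and requires the same kind of careful arithmetic with no conceptual difficulty, only ample room for sign and factorial errors.
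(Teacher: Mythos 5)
Your outline follows the paper's proof essentially step for step: apply the even semifinite local index formula, use the graded spinor trace to kill all terms $\phi_m^r(\mathrm{Ch}^m(p))$ with $m<k$, move the resolvents to the right modulo a holomorphic error, perform the Cauchy integral, and evaluate the residue with Lemma \ref{lem:residue_trace_formula_part1} and Equation \eqref{eq:residue_trace_formula}. So the strategy is sound and matches the paper.

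There is, however, one genuine gap, and it is exactly the step that distinguishes the even case from the odd one: the passage from $(2p-1)[D,p]^k$ (coming from $\mathrm{Ch}^k(p)$) to $2p[D,p]^k$. The justifications you offer do not hold up: $\partial_j p$ does not ``annihilate the scalar part after the trace'', and there is no generic ``$\partial p\cdot\partial p$ cancellation''. The correct mechanism, which the paper isolates as a separate claim, is that the graded spinor trace against $\Gamma_0$ only retains the fully antisymmetrised Clifford word in $[D,p]^k$, so the ``$-1$'' contribution is proportional to $\Tr_\tau\big(\sum_{\sigma\in S_k}(-1)^\sigma\prod_j\partial_{\sigma(j)}p\,(1+|X|^2)^{-z/2}\big)$ \emph{without} a leading $p$; since $k$ is even, a cyclic shift of the $k$ factors is an odd permutation, so traciality forces this antisymmetrised expression to vanish (the paper phrases this as symmetry with respect to the $\pm1$ eigenspaces of $\Gamma_0$). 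Without this argument the factor $2$ from $2p$, which enters the constant, is not accounted for. Relatedly, your parenthetical claim that $(p-\tfrac12)[D,p]^k\in\calA\otimes 1_{\C^\nu}$ is not correct as stated — it lies in $\Gamma_0\cdot(\calA\otimes 1_{\C^\nu})$ up to constants, and it is only after inserting the grading $\Gamma_0$ from the even cocycle that one lands in $\calA\otimes 1_{\C^\nu}$ and can invoke Lemma \ref{lem:residue_trace_formula_part1}. Finally, the even formula carries no $(2\pi i)^{-1/2}$ prefactor (that is an odd-case normalisation), and the value $C_k=(2\pi i)^{k/2}/(k/2)!$ is asserted rather than derived; since the constant is part of the statement, the bookkeeping with the resolvent-cocycle normalisation, $\sigma_{k/2,1}=((k/2)-1)!$, the spinor dimension $2^{k/2}$, the factor $2$ above, and $\mathrm{Vol}_{k-1}(S^{k-1})$ still has to be carried out.
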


Like the setting with $k$ odd, the computation can be substantially simplified with some preliminary results. 
Let us again focus on the case $q=1$ and first recall the even local index formula~\cite{CPRS3}:
$$
  \langle [p],[X_\mathrm{even}]\rangle \;=\; \res_{r=(1-k)/2} \;\sum_{m=0,\text{even}}^k 
  \; \phi_m^r(\mathrm{Ch}^m(p))\; ,
$$
where $\phi_m^r$ is the resolvent cocycle and
$$    
\mathrm{Ch}^{2n}(p) \;=\; (-1)^n \frac{ (2n)!}{2(n!)}\,(2p-1)\otimes p^{\otimes 2n}\;,  
    \qquad \mathrm{Ch}^0(p)\; =\; p\;.
$$

\vspace{.1cm}

\begin{proof}[Proof of Theorem \ref{thm:higher_dim_chern_number_even}]
The proof of 
Lemma \ref{prop:only_need_top_term_in_local_index_odd_case} also holds 
here to show that $\phi_m^r(\mathrm{Ch}^m({p}))$ does not contribute 
to the index pairing for $0<m<k$.
Therefore the index computation is reduced to
\begin{align*}
  \left\langle [p],\left[X_\text{even}\right]\right\rangle &\;=\; \res_{r=(1-k)/2} \;\phi_k^r(\mathrm{Ch}^k(p))\;,
\end{align*} 
which is a residue at $r=(1-k)/2$ of the term
$$
 \mathcal{C}_k\int_0^\infty\!\! s^k\, \Tr_\tau \Big(\Gamma_0  \int_\ell \lambda^{-k/2-r} (2p-1) R_s(\lambda) [D,p] R_s(\lambda)\cdots [D,p]R_s(\lambda) \,\mathrm{d}\lambda\Big)\mathrm{d}s
  \;,
$$
where $\Gamma_0 = (-i)^{k/2}\Gamma^1\Gamma^2\cdots\Gamma^k$ is the grading 
operator of ${\C^\nu}$ and
$$
  \mathcal{C}_k \;=\; \frac{(-1)^{k/2}k! \, 2^{k}\Gamma(k/2+1)}{i\pi(k/2)! \,\Gamma(k+1)}\;
$$
comes from the resolvent cocycle and the normalisation of $\mathrm{Ch}^{k}(p)$.
Like the case of $k$ odd, one can move the resolvent terms to the right up to a holomorphic 
error in order to take the Cauchy integral. Lemma \ref{lem:residue_trace_formula_part1} 
implies that the complex function 
$\Tr_\tau \!\left(\Gamma_0(2{p}-1)([D,{p}])^k(1+D^2)^{-z/2}\right)$ has at 
worst a simple pole at $\Re(z)=k$. Computing the residue explicitly,
$$  
\res_{r=(1-k)/2} \phi_k^r(\mathrm{Ch}^k(p)) = \frac{(-1)^{k/2}}{2((k/2)!)}\sigma_{k/2,1}\,\res_{z=k} 
\Tr_\tau\!\left(\Gamma_0 (2{p}-1)([D,{p}])^k(1+D^2)^{-z/2}\right), 
$$
where $\sigma_{k/2, 1}$ is the coefficient of $z$ in $\prod_{j=0}^{k/2-1}(z+j)$ 
and is given by the number $\sigma_{k/2, 1} = ((k/2)-1)!$. Putting these results back together,
$$ 
\langle [p], [X_\mathrm{even}]\rangle =  
(-1)^{k/2}  \frac{1}{k} \res_{z=k} \Tr_\tau\!\left(\Gamma_0 (2{p}-1)([D,{p}])^k(1+D^2)^{-z/2}\right). 
$$

\vspace{.1cm}

Next we claim that $\Tr_\tau\!\left(\Gamma_0([D,{p}])^k(1+D^2)^{-z/2}\right) = 0$ for $\Re(z)>k$. 
To see this, let us compute for $\Gamma_0= (-i)^{k/2}\Gamma^1\cdots \Gamma^k$,
\begin{align*}
   [D,{p}]^k &\;=\;  \sum_{\sigma\in S_k}(-1)^\sigma \prod_{i=1}^k \,[X_{\sigma(i)}, p] \hat\otimes \Gamma^i 
     \;= \;i^{k/2}\Gamma_0 \sum_{\sigma\in S_k}(-1)^\sigma \prod_{j=1}^k \,[X_{\sigma(j)}, p] \hat\otimes 1_{{\C^\nu}}\;.
\end{align*}
Because $\sum_{\sigma}(-1)^\sigma \prod_{j=1}^k [X_{\sigma(j)}, p]$ is 
symmetric with respect to the $\pm 1$ eigenspaces of $\Gamma_0$, the 
spinor trace $\Tr_\tau(\Gamma_0[D,{p}]^k(1+D^2)^{-z/2})$ will vanish for 
$\Re(z)>k$. Therefore the zeta function $\Tr_\tau(\Gamma_0[D,{p}]^k(1+D^2)^{-z/2})$ 
analytically continues as a function holomorphic in a neighbourhood of 
$z=k$ and its residue does not contribute to the index.

\vspace{.1cm}

We know that 
$[D,{p}] = \sum_{j=1}^k [X_j,{p}]\hat\otimes\Gamma^j = i\sum_{j=1}^k \partial_j{p}\hat\otimes\Gamma^j$ and so
$$ 
  {p}([D,{p}])^k \;=\; (-1)^{k/2} {p} \sum_{\sigma\in S_k}(-1)^\sigma 
      \prod_{j=1}^k \partial_{\sigma(j)}{p}\hat\otimes \Gamma^{j}\;. 
$$
Therefore, recalling the spinor degrees of freedom 
and using Equation \eqref{eq:residue_trace_formula},
\begin{align*}
   &\langle [p], [X_\mathrm{even}]\rangle \;=\;   (-1)^{k/2}  \frac{1}{k} \res_{z=k} 
     \Tr_\tau\!\left(\Gamma_0 \,2{p}([D,{p}])^k(1+D^2)^{-z/2}\right) \\
    &\hspace{0.2cm}\;= 
    \;(-1)^{k/2}(-1)^{k/2}\frac{ i^{k/2}2^{k/2}}{k} \res_{z=k} 
    \Tr_{\tau} \bigg( {p} \sum_{\sigma\in S_k}(-1)^\sigma \prod_{j=1}^k \partial_{\sigma(j)}{p}(1+|X|^2)^{-z/2} \bigg) \\
    &\hspace{0.2cm}\;=\; \frac{(2i)^{k/2} \mathrm{Vol}_{k-1}(S^{k-1}) }{k}\, 
    \calT \bigg(p \sum_{\sigma\in S_k}(-1)^\sigma \prod_{j=1}^k \partial_{\sigma(j)}{p} \bigg).
\end{align*}
Lastly, we use that 
$\mathrm{Vol}_{k-1}(S^{k-1}) = \frac{k\pi^{k/2}}{(k/2)!}$ for $k$ even to simplify
\begin{equation*} 
   \langle [p], [X_\mathrm{even}]\rangle \;=\;   \frac{(2\pi i)^{k/2}}{(k/2)!} \sum_{\sigma\in S_k}(-1)^\sigma \, 
   \calT \bigg({p} \prod_{i=1}^k \partial_{\sigma(i)}{p} \bigg)\;,
\end{equation*}
and this concludes the proof.
 \end{proof}

\vspace{.1cm}

The even and odd index formulas recover the generalised Connes--Chern characters 
for crossed products studied in~\cite[Section 6]{PSBKK}. We emphasise that 
while we can construct both complex and real Kasparov modules and semifinite spectral triples, 
the local index formula only applies to complex algebras and invariants.

\subsection{Application to topological phases}

Here we return to the case of $A = \big(C(\Omega) \rtimes_\phi \Z^{d-k}\big)\rtimes_\theta\Z^k$
with $B=C(\Omega)\rtimes_\phi \Z^{d-k}$. If the algebra 
is complex and the system has no chiral symmetry, then the $K$-theory class of interest 
is the Fermi projection $P_F = \chi_{(-\infty,\mu]}(H)$, which is in $A$ 
under the gap assumption. 
If there is a chiral symmetry present, 
then $H$ can be expressed as $\begin{pmatrix} 0 & Q^* \\ Q & 0 \end{pmatrix}$ 
with $Q$ invertible (assuming the Fermi energy at $0$). Therefore one can take 
the so-called Fermi unitary $U_F = Q|Q|^{-1}$ and obtain a class in $K_1(A)$. Of 
course, this unitary is relative to the diagonal chiral symmetry operator 
$R_{ch} = \begin{pmatrix} 1 & 0 \\ 0 & -1 \end{pmatrix}$ and so the 
invariants are with reference to this choice, see~\cite{DNG15,Thiang14b} 
for more information on this issue.
Provided $H$ is a matrix of elements 
in $\calA$ (which is physically reasonable), then the above local formulas for the 
weak invariants will be valid.
 
\vspace{.1cm}

Firstly, if $k=d$ then the index formulae are the Chern numbers for the 
strong invariants studied in~\cite{PSBbook}. If the measure $\bP$ on $\Omega$ is 
ergodic under the $\Z^d$-action, then 
$\calT(a) = \Tr_{\mathrm{Vol}}(\pi_\omega(a))$ for almost all $\omega$, where  
$\Tr_{\mathrm{Vol}}$ is the trace per unit volume on $\ell^2(\Z^d)$ and 
$\{\pi_\omega\}_{\omega\in\Omega}$ is a family representations 
$C(\Omega)\rtimes_\phi \Z^d\to \calB(\ell^2(\Z^d))$ linked by a covariance relation~\cite{PSBbook}. 
Under the ergodicity hypothesis, the tracial formulae become
\begin{align*}
   \langle [U_F], [X_\mathrm{odd}] \rangle &\,=\, C_k \sum_{\sigma\in S_k}(-1)^\sigma\, 
 (\Tr_{\C^q}\otimes\Tr_{\mathrm{Vol}})\bigg(\prod_{i=1}^k \pi_\omega(U_F)^* (-i)[X_{\sigma(i)},\pi_\omega(U_F)]\bigg) 
\,,
\\
  \langle [P_F], [X_\mathrm{even}]\rangle &\,=\,  C_k \,\sum_{\sigma\in S_k}\! 
  (-1)^\sigma (\Tr_{\C^q}\otimes\Tr_{\mathrm{Vol}}) \bigg(\! \pi_\omega(P_F) 
     \prod_{i=1}^k (-i)[X_{\sigma(i)},\pi_\omega(P_F)]\bigg)\,,
\end{align*}
for almost all $\omega\in\Omega$. As the left hand side of the equations are independent of 
the disorder parameter $\omega$, the weak invariants are 
stable almost surely under the disorder. Recall that we require the Hamiltonian $H_\omega$
to have a spectral gap for all $\omega\in\Omega$, so our results do not apply to the 
regime of strong disorder where the Fermi projection lies in a mobility gap.

\vspace{.1cm}

The physical interpretation of our semifinite pairings has been discussed in~\cite{PSBbook}. 
For $k$ even, the pairing $\langle [P_F], [X_\mathrm{even}] \rangle$ can be linked to the 
linear and non-linear transport coefficients of the conductivity tensor of the 
physical system. For $k$ odd, the pairing $\langle [U_F], [X_\mathrm{odd}] \rangle$ 
is related to the chiral electrical polarisation and its derivates (with respect 
to the magnetic field). See~\cite{PSBbook} for more details. All algebras are separable, which 
implies that the semifinite 
pairing takes values in a discrete subset of $\R$. Hence we have proved that the 
physical quantities related to the semifinite pairings are quantised and topologically stable.

\vspace{.2cm}

\section{Real pairings and torsion invariants} \label{sec:real_pairings}
The local index formula is currently only valid for complex algebras and 
spaces. Furthermore, the semifinite index pairing involves taking a trace 
and thus it will vanish on torsion representatives, which are more common in the 
real setting. 
Because of the anti-linear symmetries that are of interest in topological 
insulator systems, we would also like a recipe to compute the pairings 
of interest in the case of real spaces and algebras.

\vspace{.1cm}

Given a disordered Hamiltonian $H\in M_n(C(\Omega)\rtimes \Z^d)$ (considered now
as a real subalgebra of a complex algebra) satisfying time-reversal or particle-hole 
symmetry (or both) and thus determining the symmetry class index $n$, 
one can associate a class $[H]\in KO_n(C(\Omega)\rtimes \Z^d)$ 
(see~\cite{Thiang14, Kellendonk15, Kubota15b,BCR15}). The class can then be paired with the 
unbounded Kasparov module $\lambda_k$ from Proposition \ref{prop:crossed_prod_kas_mod}. 
As outlined in Section \ref{sec:spin_and_oriented_modules}, we prefer to work with the 
Kasparov module $\lambda_k$ coming from the oriented structure 
$\ell^2(\Z^k,B)\hat\otimes \bigwedge^*\R^k$ 
as the Clifford actions are explicit and easier to work with.
In the case of a unital algebra $B$ and $A=B\rtimes \Z^k$, there is 
a well-defined map
\begin{align*}
    &KO_n(B\rtimes \Z^k) \times KKO^k(B\rtimes \Z^k, B)\; \to\; KKO(C\ell_{n,k},B)
\;.
\end{align*}
The class in $KKO(C\ell_{n,k}, B)$ can be represented by 
a Kasparov module  $(C\ell_{n,k}, E_B, \hat{X})$ which can be bounded or unbounded. 
Up to a finite-dimensional adjustment 
(see~\cite[Appendix B]{BCR15}), the topological information of interest of this 
Kasparov module is contained 
in the kernel, $\Ker(\hat{X})$, which is a finitely generated and projective $C^*$-submodule of $E_B$
with a graded left-action of $C\ell_{n,k}$. If $B$ is ungraded, an Atiyah--Bott--Shapiro 
like map then gives an isomorphism $KKO(C\ell_{n,k},B) \to KO_{n-k}(B)$ via Clifford modules, 
see~\cite[Section 2.2]{Schroeder}.

\vspace{.1cm}

Considering the example of $B=C(\Omega)\rtimes \Z^{d-k}$, then one has the 
Clifford module valued index
\begin{align*}
    &KO_n(C(\Omega)\rtimes \Z^d) \times 
    KKO^k\big( C(\Omega)\rtimes \Z^d, C(\Omega)\rtimes \Z^{d-k}\big)
    \;\to\; KO_{n-k}(C(\Omega)\rtimes \Z^{d-k})\;.
\end{align*}
If $k=d$, then the pairing takes values in $KO_{n-d}(C(\Omega))$ and 
constitute `strong 
invariants'. Furthermore, fixing a disorder configuration 
$\omega\in \Omega$ provides  a map $KO_{n-d}(C(\Omega)) \to KO_{n-d}(\R)$ 
and then a corresponding analytic index formula can be obtained as in~\cite{GSB15} 
(note, however,
that \cite{GSB15} also covers the case of a mobility gap  which does not
require a spectral gap).

\vspace{.1cm}

To compute range of the weak $K$-theoretic pairing, let us first 
consider the case of $\Omega$ contractible. Then one can compute directly
$$
  KO_{n-k}(C(\Omega)\rtimes \Z^{d-k}) \;\cong\; KO_{n-k}(C^*(\Z^{d-k})) \;\cong\; 
  \bigoplus_{j=0}^{d-k}\binom{d-k}{j} KO_{n-k-j}(\R)\;,
$$
which for the varying values of $k\in\{1,\ldots,d-1\}$ recovers 
the weak phases described for systems 
without disorder in Equation \eqref{eq:periodic_weak_phase}. 
Computing the range of the pairing for non-contractible $\Omega$ is much 
harder, see~\cite[Section 6]{Kellendonk16} for the computation of 
$KO_j(C(\Omega)\rtimes\Z^2)$ for low $j$.
Note also 
that a different action $\alpha'$ on $\Omega$ or a different 
disorder configuration space $\Omega'$ could potentially lead 
to different invariants.

\vspace{.1cm}

If the $K$-theory class $[x]\in KO_{0}(B)$ is not torsion-valued and 
$B$ contains a trace, then one may take the induced trace $[\tau_B(x)]$ and 
obtain a real-valued invariant. For $B=C(\Omega)\rtimes \Z^{d-k}$, 
the induced trace plays the role of averaging over the disorder and $(d-k)$
spatial directions. For non-torsion elements in $KO_{j}(B)$ 
with $j\neq 0$, we can apply the induced trace by rewriting
$KO_{j}(B) \cong KO_0(C_0(\R^j)\otimes B) \cong KKO(\R,B\hat\otimes C\ell_{0,j})$. 
This equivalence comes with the limitation that one either has to work 
with traces on suspensions or graded traces on Clifford algebras.
Of course, if $[x]$ is a torsion element the discussion does not apply
as $[\tau(x)]=0$. See~\cite{Kellendonk16} for recent 
work that aims to circumvent some of these problems.


\section{The bulk-boundary correspondence}
We consider the (real or complex) algebra $B\rtimes_\theta\Z^k$ with $k\geq 2$ and the twist 
$\theta$ such that $\theta(m,-m)=1$ for all $m\in\Z^k$~\cite{KRS,PSBbook}. Then one can decompose
$B\rtimes_\theta\Z^k \cong (B\rtimes_\theta \Z^{k-1})\rtimes\Z$, 
which gives us a short exact sequence of $C^*$-algebras
\begin{equation} \label{eq:Toeplitz_crossedprod_ext}
  0 \;\to\; (B\rtimes_\theta\Z^{k-1})\otimes \calK(\ell^2(\N)) 
\;\to \;
\calT_\Z \to B\rtimes_\theta\Z^k 
\;\to\; 0\;.
\end{equation}
The Toeplitz algebra $\calT_\Z$ for the crossed product is described in~\cite{KRS,BKR1,PSBbook}. 
In particular, the algebra $\calT_\Z$ acts on the $C^*$-module $\ell^2(\Z^{k-1}\times\N,B)$, thought 
of as a space with boundary and the ideal $(B\rtimes_\theta\Z^{k-1})\otimes \calK(\ell^2(\N))$ can 
be thought of as observables concentrated at the boundary $\ell^2(\Z^{k-1}\times\{0\},B)$. 

\vspace{.1cm}

Let $A_e = B\rtimes_\theta\Z^{k-1}$ be the edge algebra with 
bulk algebra $B\rtimes_\theta\Z^k = A_e \rtimes\Z$. Associated to Equation 
\eqref{eq:Toeplitz_crossedprod_ext} is a class in 
$\mathrm{Ext}^{-1}(A_e\rtimes\Z,A_e) \cong KKO^1(A_e\rtimes\Z,A_e)$ by~\cite[{\S}7]{Kasparov80}. 

\begin{proposition}[\cite{BKR1}, Proposition 3.3]
The Kasparov module $\lambda_1$ from Proposition 
\ref{prop:crossed_prod_kas_mod} with $k=1$ and representing 
$[\lambda_1]  \in KKO^1(A_e\rtimes\Z,A_e)$ or $KK^1(A_e\rtimes\Z,A_e)$ 
also represents the extension class of 
Equation \eqref{eq:Toeplitz_crossedprod_ext}.
\end{proposition}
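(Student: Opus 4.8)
The plan is to realise the extension class of \eqref{eq:Toeplitz_crossedprod_ext} through the Busby/compression picture and match it termwise with the bounded transform of $\lambda_1$; this is essentially \cite[Proposition 3.3]{BKR1}, and I sketch the argument. Take $k=1$ in Proposition \ref{prop:crossed_prod_kas_mod} with base algebra $A_e = B\rtimes_\theta\Z^{k-1}$, so the underlying module is $\ell^2(\Z,A_e)\hat\otimes\bigwedge^{*}\R^1$, the operator is $X\hat\otimes\gamma^1$ with $X(\delta_m\otimes a) = m\,\delta_m\otimes a$, and the $C\ell_{0,1}$-action is generated by $\rho^1$. Set $\Pi = \chi_{[0,\infty)}(X)$ (placing the $m=0$ mode in the positive part), so that $\Pi\ell^2(\Z,A_e)\cong\ell^2(\N,A_e)$ as $A_e$-modules. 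First I would identify the Toeplitz algebra: the compression $a\mapsto \Pi a\Pi$ sends the generating unitary $S$ of $A_e\rtimes\Z$ to the isometry implementing the one-sided shift and sends $A_e$ to its natural diagonal action, so the $C^*$-algebra it generates is exactly $\calT_\Z$ as described in \cite{KRS,BKR1,PSBbook}, with the kernel of the quotient map equal to $\End_{A_e}^0(\Pi\ell^2(\Z,A_e)) \cong A_e\otimes\calK(\ell^2(\N))$.

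Next I would verify that this compression is a completely positive splitting of \eqref{eq:Toeplitz_crossedprod_ext}. The only nontrivial point is that $[\Pi,a]\in\End_{A_e}^0(\ell^2(\Z,A_e))$ for all $a$; for $a\in\calA$ this follows from the explicit formulas for the $A_e\rtimes\Z$-action exactly as in the proof of Proposition \ref{prop:spin_Dirac_module} (the commutator $[X,a]$ is adjointable and $(1+X^2)^{-1/2}$ is compact), and since $\calA$ is dense it holds for all $a\in A_e\rtimes\Z$. This simultaneously confirms the Kasparov-module conditions for $\lambda_1$ and shows $\calT_\Z/(A_e\otimes\calK)\cong A_e\rtimes\Z$ canonically, so that \eqref{eq:Toeplitz_crossedprod_ext} is precisely the extension associated with the projection $\Pi$.

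Finally I would invoke Kasparov's identification $\mathrm{Ext}^{-1}(A_e\rtimes\Z, A_e)\cong KKO^1(A_e\rtimes\Z,A_e)$ (resp. $KK^1$) from \cite[{\S}7]{Kasparov80}: under it a semisplit extension corresponds to the odd Kasparov module whose Hilbert module, representation and operator are read off from the completely positive splitting together with the grading $\Pi\oplus(1-\Pi)$. Applied to the compression splitting above, this recipe returns exactly $\bigl(\calA\hat\otimes C\ell_{0,1},\,\ell^2(\Z,A_e)\hat\otimes\bigwedge^{*}\R^1,\,X\hat\otimes\gamma^1\bigr)$: the $\Z_2$-grading of $\bigwedge^{*}\R^1$ together with the action of $\rho^1$ encodes the $\Pi$/$(1-\Pi)$ decomposition, and the bounded transform $X(1+X^2)^{-1/2}$ has the same phase as $2\Pi-1$ up to a finite-rank perturbation. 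Hence $[\lambda_1]$ equals the extension class. The main obstacle is the bookkeeping in this last step: one must check that the conventions defining $\calT_\Z$ in \cite{KRS,BKR1,PSBbook} (choice of half-line, treatment of the $m=0$ mode, graded versus ungraded odd picture, and the normalisation of $\rho^1,\gamma^1$ on $\bigwedge^{*}\R^1$) are consistent with those in Kasparov's extension-to-$KK^1$ construction, so that no sign or shift discrepancy is introduced; once these identifications are fixed, as in \cite{BKR1}, the statement is immediate.
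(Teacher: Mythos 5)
Your proposal is correct and follows essentially the same route as the proof of \cite[Proposition 3.3]{BKR1}, which this paper cites rather than reproves: compress by $\Pi=\chi_{[0,\infty)}(X)$ to exhibit the Toeplitz extension with its completely positive splitting, note $[\Pi,a]$ is compact (finite rank on the generators $S^nb$) and that the bounded transform $X(1+X^2)^{-1/2}$ agrees with $2\Pi-1$ modulo compacts, and then Kasparov's $\mathrm{Ext}^{-1}\cong KK^1$ identification returns exactly $[\lambda_1]$. No further comment is needed.
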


Similarly, one can use Proposition~\ref{prop:crossed_prod_kas_mod} to build an edge
Kasparov module $\lambda_{k-1}$ representing a class in $KKO^{k-1}(B\rtimes_\theta\Z^{k-1},B)$ or $KK^{k-1}(B\rtimes_\theta,B)$.
Hence we have a map 
$$
KKO^1(B\rtimes \Z^k, B\rtimes \Z^{k-1}) \times KKO^{k-1}(B\rtimes \Z^{k-1},B) 
 \; \to\; KKO^k (B\rtimes \Z^k, B)
$$
given by the Kasparov product $[\lambda_1]\hat\otimes_{A_e} [\lambda_{k-1}]$ at the level of classes.

\begin{theorem}[\cite{BKR1}, Theorem 3.4] \label{thm:bulk-edge}
The product $[\lambda_1]\hat\otimes_{A_e} [\lambda_{k-1}]$ has the unbounded 
representative
$$
  \bigg( \calA \hat\otimes C\ell_{0,k},\, \ell^2(\Z^{k},B)_B \hat\otimes 
  \bigwedge\nolimits^{\!\ast}\R^{k},\,\, X_k \hat\otimes\gamma^1 + 
    \sum_{j=1}^{k-1} X_j \hat\otimes \gamma^{j+1} \bigg)
$$
and at the bounded level 
$[\lambda_1]\hat\otimes_{A_e} [\lambda_{k-1}]=(-1)^{k-1}[\lambda_k]$, where
$-[x]$ represents the inverse of $[x]$ in the $KK$-group.
\end{theorem}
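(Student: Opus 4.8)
The plan is to compute $[\lambda_1]\hat\otimes_{A_e}[\lambda_{k-1}]$ at the unbounded level via a connection and the standard recognition criterion for internal Kasparov products (Kucerovsky's criterion), and then to identify the resulting class with a sign-twisted copy of $[\lambda_k]$. First I would fix the module identifications. Using $B\rtimes_\theta\Z^k\cong(B\rtimes_\theta\Z^{k-1})\rtimes\Z=A_e\rtimes\Z$ (this is where the twist $\theta$, subject to $\theta(m,-m)=1$, must be tracked, see \cite{KRS,PR89,KR06}), together with the resulting identification of $C^*$-modules $\ell^2(\Z,A_e)\hat\otimes_{A_e}\ell^2(\Z^{k-1},B)\cong\ell^2(\Z^k,B)$, the graded isomorphism $C\ell_{0,1}\hat\otimes C\ell_{0,k-1}\cong C\ell_{0,k}$ sending the generator of the first factor to $\rho^1$ and the $j$-th generator of the second to $\rho^{j+1}$, and the corresponding $\bigwedge^{\ast}\R^1\hat\otimes\bigwedge^{\ast}\R^{k-1}\cong\bigwedge^{\ast}\R^k$, one obtains an isomorphism
$$
\big(\ell^2(\Z,A_e)_{A_e}\hat\otimes\textstyle\bigwedge^{\ast}\R^1\big)\hat\otimes_{A_e}\big(\ell^2(\Z^{k-1},B)_B\hat\otimes\textstyle\bigwedge^{\ast}\R^{k-1}\big)\;\cong\;\ell^2(\Z^k,B)_B\hat\otimes\textstyle\bigwedge^{\ast}\R^k\;.
$$
Under this identification the position operator $X_k$ of $\lambda_1$ becomes multiplication by the $k$-th coordinate, $\gamma^1$ of $\bigwedge^{\ast}\R^1$ becomes $\gamma^1$ on $\bigwedge^{\ast}\R^k$, while the operators $X_j,\gamma^j$ of $\lambda_{k-1}$ for $1\le j\le k-1$ become the $j$-th coordinate and $\gamma^{j+1}$, and the left $C\ell_{0,k}$-action is by $\rho^1,\dots,\rho^k$ as usual.

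Next I would take the candidate operator to be $\sum_{j=1}^k X_j^{(\lambda_1)}\hat\otimes 1+1\hat\otimes_\nabla D_{\lambda_{k-1}}$, where $\nabla$ is the Grassmann connection for the frame $\{\delta_m\otimes 1_{A_e}\}_{m\in\Z}$ of $\ell^2(\Z,A_e)$. Since these frame elements are constant ($\nabla(\delta_m\otimes 1_{A_e})=0$, because $[D_{\lambda_{k-1}},1]=0$), the connection term vanishes on the dense subspace spanned by elements $\delta_m\otimes 1_{A_e}\otimes\omega\hat\otimes_{A_e}\eta$, and after the identification above — in which the Clifford generators of $\bigwedge^{\ast}\R^{k-1}$ act on $\bigwedge^{\ast}\R^k$ with exactly the Koszul sign carried by the graded tensor-product formula — the operator is literally
$$
\tilde X \;=\; X_k\hat\otimes\gamma^1 \;+\; \sum_{j=1}^{k-1}X_j\hat\otimes\gamma^{j+1}\;,
$$
which is the claimed unbounded representative. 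There are no hidden signs in this step.

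To confirm that $\tilde X$ represents the product I would check the three conditions of Kucerovsky's criterion. The positivity (semiboundedness) condition is the cleanest: since the $X_j$ mutually commute while the distinct Clifford generators $\gamma^1$ and $\gamma^{j+1}$ anticommute and $X_j$ is even, all cross terms in $(X_k\hat\otimes\gamma^1)\tilde X+\tilde X(X_k\hat\otimes\gamma^1)$ cancel and one is left with $2X_k^2\hat\otimes 1\ge 0$, so the constant is $0$. The domain condition follows from $\tilde X^2=|X|^2\hat\otimes 1$, which gives $\Dom(\tilde X)=\Dom(|X|\hat\otimes 1)\subseteq\Dom(X_k\hat\otimes\gamma^1)$. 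The connection/compatibility condition reduces to boundedness of the commutators of $\tilde X$ with the creation maps $\eta\mapsto(\delta_m\otimes 1_{A_e}\otimes\omega)\hat\otimes_{A_e}\eta$: the $X_k\hat\otimes\gamma^1$ part contributes the bounded scalar $m_k$, the term $\sum_jX_j\hat\otimes\gamma^{j+1}$ passes through the constant frame element up to the (bounded, by Proposition \ref{prop:crossed_prod_kas_mod}) commutators $[D_{\lambda_{k-1}},b]$ for $b\in\calA_e$, and the connection is flat on the frame. This establishes the first assertion of the theorem.

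Finally, for the bounded-level equality I would compare the module $(\calA\hat\otimes C\ell_{0,k},\,\ell^2(\Z^k,B)\hat\otimes\bigwedge^{\ast}\R^k,\,\tilde X)$ with $\lambda_k$. The two have the same module, grading and left $C\ell_{0,k}$-action, and differ only in the pairing of coordinates with Clifford generators: $\tilde X$ couples $X_k$ to $\gamma^1$ and $X_1,\dots,X_{k-1}$ to $\gamma^2,\dots,\gamma^k$, i.e. $\tilde X=\sum_{j=1}^k X_{\pi(j)}\hat\otimes\gamma^j$ with $\pi$ the $k$-cycle $\pi(1)=k,\ \pi(j)=j-1$. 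Implementing the reshuffle by the orthogonal transformation of $\R^k$ permuting the basis vectors by $\pi^{-1}$ (which acts on $\bigwedge^{\ast}\R^k$ permuting the $\gamma^j$ as desired, but simultaneously permutes the $\rho^j$), one gets an explicit isomorphism of Kasparov modules up to the permutation automorphism of $C\ell_{0,k}$, and the class of the reshuffled module equals $\mathrm{sgn}(\pi)[\lambda_k]=(-1)^{k-1}[\lambda_k]$, since the $k$-cycle is orientation-reversing exactly when $k$ is even and then the reshuffling reverses the orientation class of $\bigwedge^{\ast}\R^k$; iterating the relation for an adjacent transposition makes this precise. I expect this last sign bookkeeping — tracking the Koszul signs in the graded tensor-product identifications together with the precise effect of the orientation-reversing permutation on the left $C\ell_{0,k}$-module structure and on the grading operator — to be the main obstacle, whereas the unbounded product itself goes through almost mechanically thanks to the simple commutation relations among the $X_j$ and the Clifford generators.
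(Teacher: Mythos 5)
Your proposal is correct and follows essentially the same route as the original argument in \cite{BKR1} (Theorem 3.4), which this paper cites rather than reproves: compute the unbounded product with the flat Grassmann connection on the frame $\{\delta_m\otimes 1_{A_e}\}$, verify Kucerovsky's three conditions using the commuting position operators and anticommuting Clifford generators, and then identify the result with $(-1)^{k-1}[\lambda_k]$ by tracking the sign of the cyclic relabelling of coordinates versus Clifford generators. The only place you stop short of full detail is the final sign bookkeeping (the effect of the permutation on the left $\rho$-action and the adjacent-transposition relation), but you correctly isolate it and the standard orientation/reflection argument closes it exactly as in \cite{BKR1}.
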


Recall that the weak invariants arise from the pairing of $\lambda_k$ 
with a class $[H]\in KO_n(B\rtimes \Z^k)$ (or complex). 
Theorem \ref{thm:bulk-edge} 
implies that
\begin{align*}
  [H]\hat\otimes_{A} [\lambda_k] &\;= \;[H]\hat\otimes_A 
    \big( [\lambda_1]\hat\otimes_{A_e} [\lambda_{k-1}] \big)
  \; =\; (-1)^{k-1} \big( [H]\hat\otimes_{A} [\lambda_1] \big) \hat\otimes_{A_e} [\lambda_{k-1}]
\;,
\end{align*}
by the associativity of the Kasparov product. On the other hand, let us note that 
$[H]\hat\otimes_{A} [\lambda_1] = \partial[H]\in KO_{n-1}(A_e)$ 
as the product with $[\lambda_1]$ represents the boundary map in 
$KO$-theory associated to the short exact sequence of 
Equation \eqref{eq:Toeplitz_crossedprod_ext}. Hence the weak pairing, up 
to a possible sign, is the same as a pairing over the 
edge algebra $A_e= B\rtimes_\theta \Z^{k-1}$.

\begin{corollary}[Bulk-boundary correspondence of weak pairings]
The weak pairing $[H]\hat\otimes_{A} [\lambda_k]$ is non-trivial 
if and only if the edge pairing $\partial[H] \hat\otimes_{A_e} [\lambda_{k-1}]$ 
is non-trivial.
\end{corollary}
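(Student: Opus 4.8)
The plan is to make explicit the chain of equalities already indicated before the statement, which reduces the corollary to Theorem~\ref{thm:bulk-edge} together with the standard fact that the Kasparov product against an extension class computes the associated connecting homomorphism. Concretely, I would first recall from Theorem~\ref{thm:bulk-edge} that $[\lambda_1]\hat\otimes_{A_e}[\lambda_{k-1}] = (-1)^{k-1}[\lambda_k]$ in $KKO^k(B\rtimes\Z^k,B)$ (or its complex analogue), and then apply the associativity of the Kasparov product to write
\begin{equation*}
  [H]\hat\otimes_{A}[\lambda_k] \;=\; (-1)^{k-1}\,\big([H]\hat\otimes_{A}[\lambda_1]\big)\hat\otimes_{A_e}[\lambda_{k-1}]
  \qquad\text{in } KO_{n-k}(B).
\end{equation*}
Since multiplication by $\pm 1$ is an automorphism of the abelian group $KO_{n-k}(B)$, the sign plays no role for the (non-)triviality question.

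Next I would invoke the Proposition stated above identifying the $k=1$ Kasparov module $\lambda_1$ with the extension class of~\eqref{eq:Toeplitz_crossedprod_ext} in $\mathrm{Ext}^{-1}(A_e\rtimes\Z,A_e)\cong KKO^1(A_e\rtimes\Z,A_e)$. Combining this with the interpretation of the Kasparov product against an extension class as the $KO$-theory boundary map of the corresponding six-term exact sequence (see Kasparov~\cite[{\S}7]{Kasparov80}), one gets $[H]\hat\otimes_{A}[\lambda_1] = \partial[H]\in KO_{n-1}(A_e)$, where $\partial$ is the connecting map of~\eqref{eq:Toeplitz_crossedprod_ext}. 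Substituting yields $[H]\hat\otimes_{A}[\lambda_k] = (-1)^{k-1}\,\partial[H]\hat\otimes_{A_e}[\lambda_{k-1}]$, so the bulk pairing vanishes precisely when the edge pairing does, which is exactly the asserted equivalence. In the complex case the argument is identical with $KKO$ and $KO_\ast$ replaced by $KK$ and $K_\ast$.

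There is essentially no hard step: the substance of the corollary is contained in Theorem~\ref{thm:bulk-edge} and in the associativity and naturality of the Kasparov product. The only point requiring some care is the bookkeeping of grading conventions and of the spinor-dimension normalisations relating $\lambda_k$ and $\lambda_k^{\!\mathfrak{S}}$ (and hence relating these $KK$-products to the semifinite pairings of the previous sections); but since the present statement concerns only (non-)triviality of a $KK$-product, all such normalisations are invertible and do not affect the conclusion.
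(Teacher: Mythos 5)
Your argument is correct and is essentially the paper's own: the discussion preceding the corollary uses exactly the factorisation $[\lambda_1]\hat\otimes_{A_e}[\lambda_{k-1}]=(-1)^{k-1}[\lambda_k]$ from Theorem~\ref{thm:bulk-edge}, associativity of the Kasparov product, and the identification of $\cdot\,\hat\otimes_A[\lambda_1]$ with the boundary map $\partial$ of the Toeplitz extension, with the sign being irrelevant for (non-)triviality. Your added remarks on normalisations and the complex case are consistent with the paper and introduce nothing at odds with its proof.
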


In the real case we achieve a bulk-boundary correspondence of the $K$-theoretic 
pairings representing the weak invariants. 
The Morita equivalence between spin and oriented structures means that 
Theorem \ref{thm:bulk-edge} also applies to the spin Kasparov module 
$\lambda_k^{\!\mathfrak{S}}$. In particular, the bulk-boundary correspondence extends 
to the semifinite pairing, allowing us to recover the following result from \cite{PSBbook}.

\begin{corollary}[Bulk-boundary correspondence of weak Chern numbers]
The cyclic expressions for the complex semifinite index pairing 
are the same (up to sign) for the bulk and edge algebras. Namely 
for $k\geq 2$ and $p,u\in M_q(\calA)$,
\begin{align*}
  &\langle [u], [X_\mathrm{odd}] \rangle \;=\; \langle \partial[u], [X_{\mathrm{even}}] \rangle\;,
  &&\langle [p],[X_\mathrm{even}] \rangle \;=\; - \langle \partial [p], [X_\mathrm{odd}] \rangle\;.
\end{align*}
\end{corollary}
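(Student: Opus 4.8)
The plan is to prove the identity at the level of $K$-theory and $KK$-theory, using Lemma~\ref{lemma:semifinite_is_KK} to convert both sides into a pairing of the form $(\tau_B)_\ast\circ(\,\cdot\,\hat\otimes[\lambda^{\!\mathfrak{S}}])$, and then to unfold the factorisation of $[\lambda_k^{\!\mathfrak{S}}]$ coming from Theorem~\ref{thm:bulk-edge}; the explicit cyclic expressions of Theorems~\ref{thm:higher_dim_chern_number_odd} and~\ref{thm:higher_dim_chern_number_even} are only needed to know that the two numbers being compared are the stated cyclic cocycles. First I would reduce to $q=1$: the matrix case is handled, exactly as in those theorems, by tensoring $D$ with $1_q$, and since stabilising a Kasparov module does not change its class the matrix identity follows from the scalar one. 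Next I would apply Lemma~\ref{lemma:semifinite_is_KK} twice, once to the bulk algebra $A=B\rtimes_\theta\Z^k$ in dimension $k$ and once to the edge algebra $A_e=B\rtimes_\theta\Z^{k-1}$ in dimension $k-1$ (whose dual trace is built from the same $\tau_B$, which is invariant under $\Z^{k-1}\subset\Z^k$, and whose extended trace $\calT$ restricts consistently), obtaining $\langle[u],[X_\mathrm{odd}]\rangle=(\tau_B)_\ast\big([u]\hat\otimes_A[\lambda_k^{\!\mathfrak{S}}]\big)$ and $\langle\partial[u],[X_\mathrm{even}]\rangle=(\tau_B)_\ast\big(\partial[u]\hat\otimes_{A_e}[\lambda_{k-1}^{\!\mathfrak{S}}]\big)$, and the analogous two equalities with $[p]\in K_0$ in place of $[u]$ and the parities of the spectral triples exchanged. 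Here $\partial[u]\in K_0(A_e)$ and $\partial[p]\in K_1(A_e)$ are the boundary maps of the Toeplitz extension \eqref{eq:Toeplitz_crossedprod_ext}, which by the discussion following Theorem~\ref{thm:bulk-edge} are computed by the Kasparov product with $[\lambda_1]=[\lambda_1^{\!\mathfrak{S}}]$ (the two modules coinciding in degree $1$).

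The core step is the spin counterpart of Theorem~\ref{thm:bulk-edge}, namely $[\lambda_k^{\!\mathfrak{S}}]=(-1)^{k-1}\,[\lambda_1^{\!\mathfrak{S}}]\hat\otimes_{A_e}[\lambda_{k-1}^{\!\mathfrak{S}}]$ in $KK^k(A,B)$. I would obtain this by transporting Theorem~\ref{thm:bulk-edge} along the spin/oriented Morita equivalences described in Section~\ref{sec:spin_and_oriented_modules}: the external product over $\C$ of the degree-$1$ and degree-$(k-1)$ Morita bimodules realises, compatibly with the splitting $\C\ell_k\cong\C\ell_1\hat\otimes\C\ell_{k-1}$, the degree-$k$ one, so the identity of $KK$-classes is unaffected. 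Granting this, associativity of the Kasparov product gives
\[
 [u]\hat\otimes_A[\lambda_k^{\!\mathfrak{S}}]\;=\;(-1)^{k-1}\,\big([u]\hat\otimes_A[\lambda_1^{\!\mathfrak{S}}]\big)\hat\otimes_{A_e}[\lambda_{k-1}^{\!\mathfrak{S}}]\;=\;(-1)^{k-1}\,\partial[u]\hat\otimes_{A_e}[\lambda_{k-1}^{\!\mathfrak{S}}]\;,
\]
and applying $(\tau_B)_\ast$ together with the formulas from the previous paragraph yields $\langle[u],[X_\mathrm{odd}]\rangle=(-1)^{k-1}\langle\partial[u],[X_\mathrm{even}]\rangle$, while the identical computation with $[p]$ gives $\langle[p],[X_\mathrm{even}]\rangle=(-1)^{k-1}\langle\partial[p],[X_\mathrm{odd}]\rangle$. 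Finally I would read off the sign: in the first identity $[u]$ lies in $K_1$, forcing $k$ odd, so $(-1)^{k-1}=1$; in the second $[p]$ lies in $K_0$, forcing $k$ even, so $(-1)^{k-1}=-1$. This produces exactly the $+$ and $-$ in the statement, with the hypothesis $k\geq 2$ inherited from Theorem~\ref{thm:bulk-edge}.

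I expect the main obstacle to be the bookkeeping in the Morita transport of Theorem~\ref{thm:bulk-edge} to the spin modules. Concretely, one must check that the two inequivalent irreducible spinor representations occurring in odd Clifford degree are matched correctly (so that the graded representative \eqref{eq-OddKKRep} of the odd module is the one that makes the spinor-dimension factor $2^{\lfloor k/2\rfloor}$ uniform across the splitting), and — most importantly — that the sign $(-1)^{k-1}$ produced by Theorem~\ref{thm:bulk-edge} for the oriented modules is not altered when one replaces $\bigwedge^*\R^k$ by the spinor bundle. A more pedestrian alternative that avoids this is to substitute the explicit action of the boundary map $\partial$ on the Fermi unitary and the Fermi projection into the cyclic cocycles of Theorems~\ref{thm:higher_dim_chern_number_odd} and~\ref{thm:higher_dim_chern_number_even} and compare them term by term; but the $KK$-theoretic route above is shorter, reuses machinery already in place, and makes transparent why the same correspondence holds simultaneously for the $K$-theoretic (including torsion) pairings of the previous corollary.
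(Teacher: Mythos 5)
Your proposal is correct and follows essentially the same route as the paper: the paper's one-line proof likewise rests on transporting Theorem~\ref{thm:bulk-edge} to the spin modules via the spin/oriented Morita equivalence, identifying the semifinite pairings with $(\tau_B)_\ast$ of Kasparov products through Lemma~\ref{lemma:semifinite_is_KK}, and using associativity together with $[x]\hat\otimes_A[\lambda_1]=\partial[x]$, with the sign $(-1)^{k-1}$ resolved by the parity of $k$ exactly as you argue. You simply spell out the bookkeeping (reduction to $q=1$, the two applications of the lemma, the Morita transport) that the paper leaves implicit.
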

\begin{proof}
Because the factorisation of pairings occurs at the level of the Kasparov 
modules $\lambda_k^{\!\mathfrak{S}}$, the result immediately follows when taking the 
trace.
\end{proof}

\vspace{.1cm}

Recall that for $B=C(\Omega)\rtimes_\phi \Z^k$, the complex $K$-theory classes 
of interest were the Fermi projection $P_F$ or the Fermi unitary coming from 
$\mathrm{sgn}(H) = \begin{pmatrix} 0 & \;U_F^* \\ U_F & \;0 \end{pmatrix}$ 
if $H$ is chiral symmetric. 
We take the edge algebra,
$A_e = \big(C(\Omega)\rtimes_\phi \Z^{d-k}\big)\rtimes_\theta \Z^{k-1} 
\cong C(\Omega)\rtimes_\phi \Z^{d-1}$, which is an algebra associated to a 
system of $1$ dimension lower. 
The boundary maps in $K$-theory $\partial[P_F]$ and $\partial[U_F]$ can be 
written in terms of the Hamiltonian $\wh{H}\in \calT_\Z$ associated to the 
system with boundary. Furthermore, the pairings 
$\langle \partial[P_F], [X_\mathrm{odd}] \rangle$ 
and $\langle \partial[U_F], [X_{\mathrm{even}}] \rangle$ can be related to 
edge behaviour of the sample with boundary, e.g. edge conductance, see~\cite{KRS,PSBbook}. Hence 
in the better-understood complex setting, the bulk-boundary correspondence has 
both physical and mathematical meaning.

\vspace{.2cm}

\end{document}